\definecolor{grayish}{rgb}{0.9,0.9,0.9}
\definecolor{blueish}{rgb}{0.6,0.6,1.0}
\definecolor{reddish}{rgb}{1.0,0.6,0.6}
\definecolor{dkgrey}{rgb}{0.6,0.6,0.6}
\newcommand{\varfont}[1]{\mathtt{#1}}
\newcommand{\constfont}[1] {\mbox{\textbf{#1}}}
\newtheorem{definition}{Definition}[section]
\newtheorem{theorem}{Theorem}[section]
\newtheorem{proposition}{Proposition}[section]
\newcommand{\diamondcube}{diamond}
\newcommand{\Diamondcube}{Diamond}
\newcommand{\imd}{IMD}
\newcommand{\emd}{EMD} 
\newcommand{\algosize}{\footnotesize}
\newcommand{\dblpCubeOne}{\textsf{D1}}
\newcommand{\dblpCubeTwo}{\textsf{D2}}
\newcommand{\dblpCubeThree}{\textsf{D3}}
\newcommand{\twCubeOne}{\textsf{TW1}}
\newcommand{\twCubeTwo}{\textsf{TW2}}
\newcommand{\twCubeThree}{\textsf{TW3}}
\newcommand{\nfCubeOne}{\textsf{NF1}}
\newcommand{\nfCubeTwo}{\textsf{NF2}}
\newcommand{\nfCubeThree}{\textsf{NF3}}
\newcommand{\cCubeOne}{\textsf{C1}}
\newcommand{\cCubeTwo}{\textsf{C2}}
\newcommand{\bCubeOne}{\textsf{B1}}
\newcommand{\bCubeTwo}{\textsf{B2}}
\newcommand{\bCubeThree}{\textsf{B3}}
\newcommand{\bCubeFour}{\textsf{B4}}
\newcommand{\bCubeFive}{\textsf{B5}}
\newcommand{\bCubeEight}{\textsf{B6}}
\newcommand{\wCubeFour}{\textsf{W1}}
\newcommand{\wCubeTwo}{\textsf{W2}}
\newcommand{\ssbCubeOne}{\textsf{SSB1}}
\newcommand{\uCubeOne}{\textsf{U1}}
\newcommand{\uCubeTwo}{\textsf{U2}}
\newcommand{\uCubeThree}{\textsf{U3}}
\newcommand{\sCubeOne}{\textsf{S1}}
\newcommand{\sCubeTwo}{\textsf{S2}}
\newcommand{\sCubeThree}{\textsf{S3}}
\newcommand{\ssCubeOne}{\textsf{SS1}}
\newcommand{\ssCubeTwo}{\textsf{SS2}}
\newcommand{\ssCubeThree}{\textsf{SS3}}
\newcommand{\hazel}[1]{#1}
\newcommand{\owen}[1]{#1}
\journal{Data and Knowledge Engineering}
\author[unb]{Hazel Webb}
\address [unb]{University of New Brunswick Saint John}
\ead{hazel.webb@unb.ca}
\ead[url]{http://hazel-webb.com}
\author[uq]{ Daniel Lemire}
\address[uq]{TELUQ, Universit\'{e} du Qu\'{e}bec}
\ead{lemire@gmail.com}
\author[unb]{Owen Kaser}
\ead{owen@computer.org}
\begin{document}

\begin{frontmatter}
\title{Diamond Dicing}

\begin{abstract}
In OLAP, analysts often select an interesting sample of the data. For example, an analyst might focus on products bringing revenues of at least  \$100\,000, or on shops having sales greater than \$400\,000. However, current systems do not allow the application of 
both of these thresholds simultaneously, selecting products and shops satisfying both thresholds. 
 For such purposes, 
 we introduce the \diamondcube\ cube operator, filling a gap among existing
   data warehouse operations.  

 Because of the interaction between dimensions  the computation of \diamondcube\ cubes is challenging.
 We compare and test various algorithms on  large data sets of
 more than 100 million facts. We find that while it is possible to implement 
 \diamondcube{}s in SQL, it is inefficient. Indeed, our custom implementation can be a hundred times faster than  popular database engines (including a row-store and a column-store).   
 
\end{abstract}

\begin{keyword}
OLAP \sep information retrieval\sep multidimensional queries

\end{keyword}

\end{frontmatter}

\section{Introduction\label{sec:intro}}   

An analyst often wants  
to focus on an \emph{interesting} part of 
her
data
set. Sometimes  this means 
she wants 
to focus on only some attribute values. For example,
she 
might select only the data related to the cities of Montreal and
Toronto between the months of July and October. This operation is a
\emph{dice} (Section~\ref{sec:formal-model}). Unfortunately, dicing requires
that the analyst know 

exactly which attribute values she 
needs. Instead of specifying the attribute values, the analyst might prefer to specify a threshold.
For example, she can make an \emph{iceberg query} (Section~\ref{sec:subsampdatabase})~: e.g.,
the cities responsible for at least \$10~million in sales. 

Unfortunately, it is difficult to apply thresholds over several dimensions.
The analyst might have selected 
cities generating at
least a certain volume of sales (\$10~million), and then 
select 
products responsible for a certain sales volume (say
\$5~million) in these cities.
Unfortunately, after selecting the popular products (\$5~million), the constraint on cities (\$10~million) may no longer be satisfied.
Moreover, the analyst could equally start from a
product selection that generates a sales volume of at least  
\$5~million, and then ask which cities have 
sales of at least 
\$10~million when considering only these products. This could produce a different result.

Instead, we propose \diamondcube{} dicing. It 
 applies  constraints
simultaneously on several dimensions in a consistent manner. For example, we may seek the cities with a sales volume
of at least \$10~million dollars, and products with a sales volume of at
least \$5~million. We require \emph{both} constraints to be \emph{simultaneously} 
satisfied.
Intuitively, \diamondcube{} dicing is a multidimensional generalisation  of icebergs. 
It 
is also an instance of dicing, but one where the analyst 
need not manually specify
 the interesting attribute values: instead, 
as with 
an iceberg query, the analyst might only specify interesting thresholds (on sales, quantities and so on). 

\begin{figure}
\centering

\begin{tabular}{!{\color{black}\vline\hspace{1pt}}c!{\color{black}\vline\hspace{-1pt}}ccccc|r|}\hline\\[-1.25em]
                                              & Chicago  & \cellcolor{blueish}{Montreal} & \cellcolor{blueish}{Miami}  &\cellcolor{blueish}{Paris} & Berlin & Totals \\\\[-1.25em]\hline
 TV                                     &   3.4   & 0.9                       &   0.1                      & 0.9                        &   2.0  & 7.3\\
\cellcolor{reddish}{Camcorder} &   0.1   & \cellcolor{grayish}{1.4}  &   \cellcolor{grayish}{3.1} &   \cellcolor{grayish}{2.3} &   2.1  & 9.0\\
\cellcolor{reddish}{Phone}     &   0.2   & \cellcolor{grayish}{6.4}  &   \cellcolor{grayish}{2.1} &   \cellcolor{grayish}{3.5} &   0.1  & 12.3\\
\cellcolor{reddish}{Camera}    &   0.4   & \cellcolor{grayish}{2.7}  &   \cellcolor{grayish}{5.3} &   \cellcolor{grayish}{4.6} &   3.5  & 16.5\\
 Game Console                            &   3.2   & 0.3                       &   0.3                      &   2.1                      &   1.5  & 7.4\\
DVD Player                              &   0.2   & 0.5                       &   0.5                      &   2.2                      &   2.3  & 5.7\\\hline
\hspace*{-4em}Totals  & 7.5 & 12.2 & 11.4 & 15.6 & 11.5 & 58.2\\\hline
\end{tabular}
\caption{Sales (in million dollars): the shaded region is a \emph{\diamondcube{}} where stores in selected cities need to
have sales above \$10~million whereas products need sales above
\$5~million. \label{tb:motivation}}

\end{figure}

Unlike regular dicing or iceberg queries, the computation of a
\diamondcube{} dice (henceforth called a \diamondcube{})
 is a challenge because of the interaction between the dimensions. 
Indeed, consider Fig.~\ref{tb:motivation}. Applying a threshold of \$10~million on sales for the cities would eliminate Chicago,
whereas applying the \$5-million threshold on products would not terminate any product.
However, once the shops in Chicago are closed,  the
products TV and Game Console  fall below the threshold of
\$5-million\footnote{The sum of TV sales is now 3.9, and the sum of Game Console sales is 4.2.}. 
We cannot stop now, after processing
each dimension once: removal of these products
causes the removal of the Berlin
store and, finally, the termination of the DVD Player product-line. 
Thus, 
simultaneously satisfying 
constraints on several dimensions may require several iterations.

We must also provide guidance regarding the
selection of the thresholds. In our example based on
Fig.~\ref{tb:motivation}, we used two thresholds (\$10~million for
stores and \$5~million for products)---but what 
 if the analyst does not have specific thresholds in mind? As a sensible default, we might put the same threshold
 $k$ on both stores and products. If $k$ is too high, the \diamondcube{} is empty. 
So we might seek $\kappa$, which is the largest value of $k$ so that the \diamondcube{} is not empty. 
 This
value $\kappa$ could be an interesting default threshold for the
analyst. In our example, $\kappa=7.4$ and the corresponding
\diamondcube{} comprises the attribute values 
Phone, Camera, Montreal, Miami, and Paris. 
Within this dice, all cities and
products have at least \$7.4~million in sales. We present and test
efficient algorithms for finding $\kappa$ (starting in
Section~\ref{sec:APrioriBoundsontheCarats}).

Our next section presents several motivating examples. Then
we present formal definitions in Section~\ref{sec:properties}. In particular, we show that our definition of a \diamondcube{} is sound by proving that there is a unique solution to the \diamondcube{} query. 
In Section~\ref{sec:algorithms}, we present 
 efficient algorithms to compute \diamondcube{}s.
We review experimentally the efficiency of our algorithms in Section~\ref{sec:experiments}. Finally, we review
related work.\footnote{Our work extends a conference paper~\cite{diamond-ideas}
 where a single algorithm was tested over small data sets.}

\section{Motivating Examples}
\label{sec:examples}

We consider example applications  to further motivate
diamond dicing.  We show how diamonds allowed us to find facts  that  surprised us in  different applications.

\paragraph{Bibliometrics example}
Consider a bibliographic table with columns
for author and venue. 
 Perhaps we
want to analyse the publication habits of professors, but much work
would be required to identify precisely which authors are professors.  
However, perhaps we can assume
that most authors without at least 5~publications, in venues where professors publish, are not professors.  
The diamond with a threshold of 5  publications per author and a threshold of one publication (from these authors) per venue  
will exclude them.  This diamond is the largest author-venue subcube 
where authors have 5~publications each in 
selected venues, and where selected venues each have at least one~publication from
selected authors. 

For illustration, we processed conference publication data 
available from DBLP~\cite{DBLPXML}; the data and details of its
preparation are given elsewhere~\cite{hazel-website}.  
See Table~\ref{tab:dblp} for some characteristics of diamonds in this data.
We find that the diamond corresponding to ``professors'' prunes about 82\% of
all authors (115\,341 out of  640\,674). Maybe surprisingly, it 
only prunes 4 venues\footnote{If we require that each author published in at least 5 \emph{different} venues, then we prune about 86\% of authors, and only 5 venues.}.
A similar result remains true if we compute the diamond corresponding to prolific ``professors'' having published at least 50~papers:
out of 5\,065, only 249 venues are pruned. Yet this diamond contains only 4\,790~authors out of 640\,674~possible
authors, which is a  selective group (less than 1\%).
Setting high thresholds is particularly useful in obtaining
smaller, more easily analysed, sets of data.  
For these purposes, we built an interactive tool that finds
the highest thresholds generating non-empty diamonds. For example, we may 
query for the largest value of $\kappa$ such that 
the following \diamondcube\  is not empty:
authors with at least $\kappa$~publications each in
retained venues, and retained venues each with at least $\kappa$~publications from
retained authors. 
In this case, the answer is $\kappa=119$. 
We found this occurrence surprising. This diamond contains 11~prolific authors in the
area of digital hardware and computer-aided design, who publish in 7~venues. 

\begin{table}\centering
{
\caption{\label{tab:dblp}Characteristics of selected diamonds from DBLP.
}
\begin{tabular}{ccc|rrcc}\hline
\multicolumn{3}{c|}{Threshold needed for each} & \multicolumn{2}{c}{Retained}   & \% size    & \\
author & \multicolumn{2}{c|}{venue} & authors & venues &  reduction & interpretation\\ \hline
1 & \multicolumn{2}{c|}{1} & 640\,673 & 5065 & 0 & all\\ 
5 & \multicolumn{2}{c|}{1} & 115\,341 & 5061 & 42 & professors \\  
50 & \multicolumn{2}{c|}{1} & 4790 & 4816 & 90 & prolific professors \\  
119 & \multicolumn{2}{c|}{119} & 11  & 7      & $>$99.9 &  hardware cluster 
\\\hline
\end{tabular}
}
\end{table}

In a modified form of the bibliometrics cube, we associated each publication
with a main keyword, obtaining a 3-dimensional cube~\cite{springerlink:10.1007/978-3-642-20095-3_47}.  Putting a
threshold on the keyword dimension can restrict analysis to popular
or mainstream topics.

Consider constraining the authors to have at least
108~publications (on mainstream topics, in popular venues), the
topics to have at least 6~occurrences (by prolific authors, in popular
venues), and the venues to have at least 20~publications (by prolific
authors, on mainstream topics).   We find
the publications
of 
I.~Pomeranz and S.~M.~Reddy in 8~hardware venues. Within these publications
the most frequent keyword is `synchronous', which  occurred 7 times more often
than the least frequent, `sequential'. Globally 
these keywords are almost equally frequent and are ranked 286th and
289th.  
These two authors are ranked 20th and 11th, globally.

\paragraph{Netflix example}
In the Netflix movie-rating database (discussed later; see 
Fig.~\ref{fig:model-map}), users have provided ratings for 
various movies, and the dates of ratings are also recorded.
Someone studying patterns in collaborative work might be interested
that there is a subset of the Netflix data where each user entered
at least 1004~ratings on movies rated at least 1004~times by these
 same users during days where there were at least 1004~ratings by 
these same users on these same movies.  We found the result surprising. 

\paragraph{Star Schema Benchmark example}

We might be interested in seeking the subset of customers and
suppliers such that each customer accounts for a sizable revenue with
selected suppliers and the suppliers each account for a sizable
revenue on those customers.  We took the fact table from the Star
Schema Benchmark~\cite{DBGEN} and rolled it up to two columns, 
customer and supplier, with revenue as the measure.  (Cube \ssbCubeOne\
statistics are given later.)  We found that about 10\% of the
customers (2\,174) each generate revenue of at least \$1.5 billion\footnote{$\kappa$ = 1\,581\,756\,429.}
 from a group of 
1\,996 suppliers (99.8\%) 
and,
simultaneously, each 
of these 1\,996 suppliers 
generates at least \$1.5 billion from
the 2\,174 
customers.  These customers and suppliers together
account for approximately 17\% of the total revenue and 16\% of the
data.  Since the Star Schema Benchmark is synthetic data generated
from  uniform distributions~\cite{tpchdescription,ssbdescription}, 
 this result is not surprising.
\section{Properties of \Diamondcube\ Cubes }
\label{sec:properties}

In this section,
 we present a formal model of the \diamondcube\ cube.
We show that \diamondcube s are nested,
with a smaller
\diamondcube\ existing within a larger \diamondcube{}.  We also prove a
uniqueness property for \diamondcube s and we establish upper and
lower bounds on the parameter $\kappa$ 
 for both  \textsc{count}  and
\textsc{sum}-based \diamondcube\ cubes.

\subsection{Formal Model\label{sec:formal-model}}

\begin{figure}
\subfloat[3-D cube]{\label{subfig:cube}
\includegraphics[height=3.5cm]{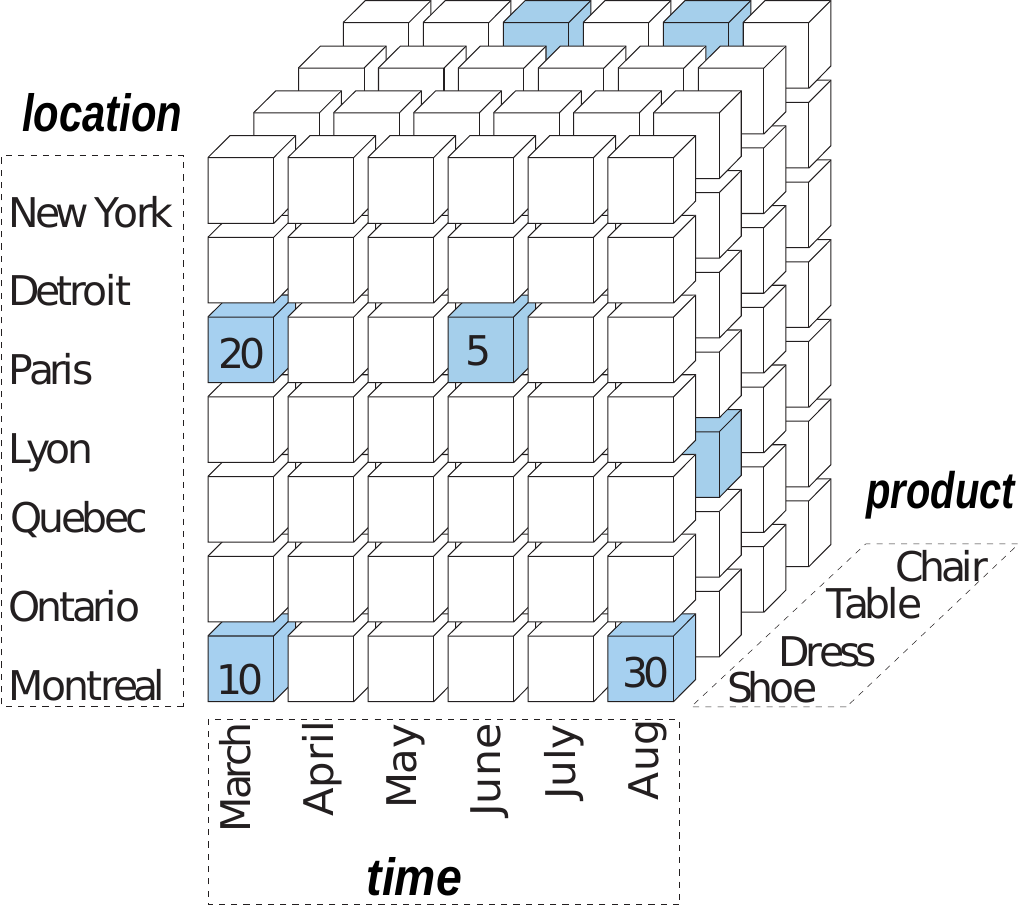}}
\subfloat[Slice on product shoe\label{subfig:slice}]{%
\includegraphics[height=3.5cm]{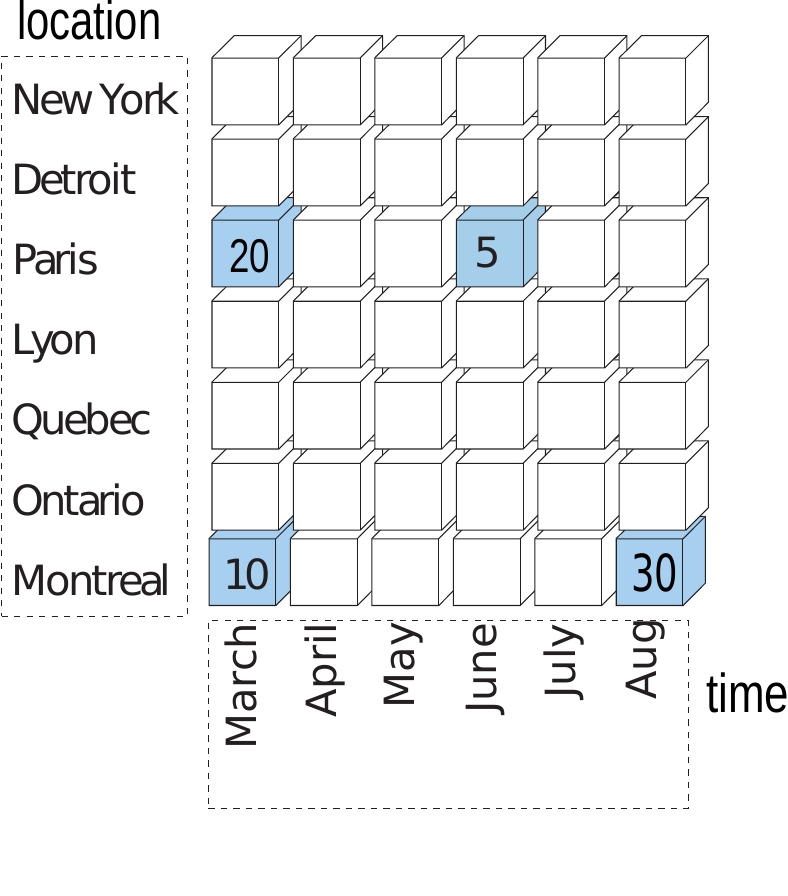}}
\subfloat[Dice on months March--May\label{subfig:dice}]{%
\hspace{0.3cm}\includegraphics[height=3.5cm]{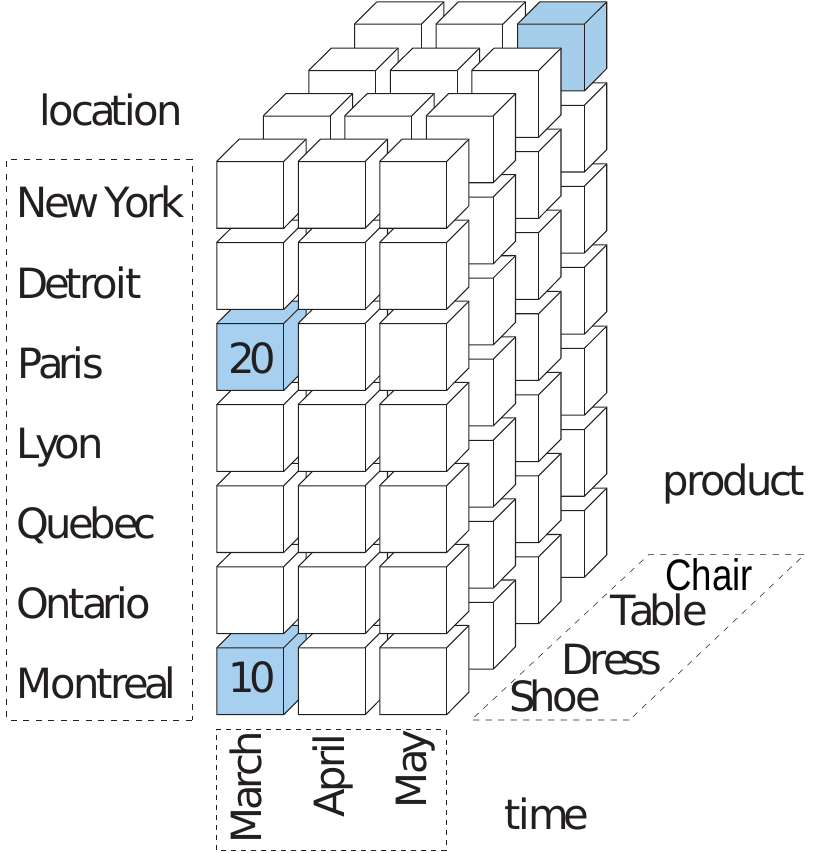}}
\caption{OLAP terms: cube, 
 slice and dice.\label{fig:cube-operators}} 
\end{figure}

Researchers and developers have yet to agree on a single multidimensional model for OLAP~\cite{1183515,Mazon2009} Our simplified formal model incorporates several widely accepted definitions for the terms illustrated in  Fig.~\ref{fig:cube-operators}, together with new terms associated specifically with \diamondcube s. For clarity, all terms are defined in the following paragraphs.

 A  \emph{dimension} $D$  is a set of \emph{attributes} that defines 
 one axis of a multidimensional data structure.  For example, in Fig.~\ref{fig:cube-operators} the dimensions are location, time and product. Each dimension $D_i$ has a cardinality $n_i$, the number of distinct attribute values in this dimension. Without losing generality, we  assume that $n_1 \leq
n_2 \leq \ldots \leq n_d$.  A dimension can be formed from a single
attribute of a database relation,  and the number of dimensions   is
denoted by $d$. 

A \emph{cube} is the 2-tuple ($\mathcal D,f$) which is 
 the set of dimensions \{$D_1, D_2, \ldots, D_d$\} together with a total function ($f$) which maps  tuples in $D_1 \times D_2 \times \dots \times D_d $ to $\mathbb{R} \cup \{\bot\}$, where $\bot$ represents undefined. Fig.~\ref{subfig:cube} shows a {cube}  with three dimensions.
 
A  \emph{cell} of cube $C$  is a 2-tuple (($x_1,x_2,\ldots,x_d) \in
D_1 \times D_2 \times \dots \times D_d,v$) where $v =
f(x_1,x_2,\ldots,x_d)$ is called a \emph{measure}.  The
measure  may be a value $v \in \mathbb{R}$, in which case we say the cell is an \emph{allocated cell}.  Otherwise, the measure is $\bot$ and we say the cell is empty---an \emph{unallocated cell}.   For the purposes of this paper, a measure is a single value.  In more general OLAP applications, a cube may map to several measures.  Also, measures  may take values other than real-valued numbers---Booleans, for example.

A \emph{slice}   is the cube $C' = (\mathcal D',f')$  obtained when  a single
attribute value is fixed in one dimension  of cube $C = (\mathcal D,f)$. 
For example, Fig.~\ref{subfig:slice} is a slice of the cube presented in Fig.~\ref{subfig:cube}.

A  \emph{dice}  defines a cube $S$ from an existing cube by removing attribute values and the corresponding cells.
For example, Fig.~\ref{subfig:dice} illustrates a dice applied to the cube from Fig.~\ref{subfig:cube} where all
months except March, April and May were removed. The resulting cube still has the same number of dimensions.
We call it a \emph{subcube} because its dimensions are subsets of the dimensions of the original cube,
and, as a function,  it is a restriction to the corresponding subset of cells.

An \emph{aggregator} is a function, $\sigma$, that assigns a real number to a set of cells---such as a slice.  For example, \textsc{sum} is an  aggregator: \textsc{sum}($\mathrm{slice}_i$) = $v_1 + v_2 +\dots + v_m$  where $m$ is the number of allocated  cells in $\mathrm{slice}_i$ and the $v_i$'s are the measures.  

 A slice $S'$ is a subset of slice $S$ if  every allocated  cell in
 $S'$  is also an allocated  cell in  $S$. An aggregator  $\sigma$ is
 monotonically non-decreasing if $S' \subset S$ implies $\sigma(S')
 \leq \sigma(S)$. Similarly, $\sigma$ is monotonically non-increasing
 if $S' \subset S$ implies $\sigma(S') \geq \sigma(S)$. Monotonically
 non-decreasing operators include  \textsc{count}, \textsc{max} and
 \textsc{sum} over non-negative measures. Monotonically non-increasing operators include  \textsc{min} and  \textsc{sum} over non-positive measures.
\textsc{mean} and  \textsc{median} are neither monotonically non-increasing, nor non-decreasing functions.

 Our formal model maps to the relational model in the following ways: (See Fig.~\ref{fig:model-map}.) 

\begin{itemize}
\item A \textbf{cube} corresponds to a  fact table: a relation whose attributes comprise a primary key and a single measure.
\item An \textbf{allocated cell} is a fact, i.e.\  it is a distinct record in a fact table.

\item A \textbf{dimension} is one of the attributes that compose the primary key.
 \end{itemize} 

\begin{figure}
\centering
\begin{tabular}{crcc}\hline
Movie & Reviewer & Date   & Rating\\\hline
1 & 1488844 & 2005-09-06  & 3\\
1 & 822109  & 2005-05-13  & 5\\
1 & 885013  & 2005-10-19  & 4\\
1 & 30878   & 2005-12-26  & 4\\
1 & 823519  & 2004-05-03  & 3\\
1 & 893988  & 2005-11-17  & 3\\
1 & 124105  & 2004-08-05  & 4\\
1 & 1248029 & 2004-04-22  & 3\\\hline
\end{tabular}
\caption[Part of the Netflix~\cite{netflixprize} fact table (cube).]{Part of the 
\nfCubeTwo~fact table (see Section~\ref{sec:real}).  Attributes (dimensions) are Movie, Reviewer and Date.  Each row is a fact (allocated cell). The measure is Rating.\label{fig:model-map}}
\end{figure}

\subsection{Diamond Cubes are Unique}\label{sec:unique}

Intuitively, a diamond cube is a subcube where all attribute values satisfy a threshold condition. For example,
all selected stores must have total sales over one million dollars. We
call such  threshold 
conditions carats.

\begin{definition}\label{def:diamond}

Given a number $k$, a  cube has $k$~carats along a dimension 
if the aggregate of every slice along that dimension is at least $k$. That
is, for every slice $x$, 
 we have $\sigma(x)\geq k$.

\end{definition}

Note that if a dimension has $k$ carats, it necessarily has $k'$ carats
for $k'  <  k $. 

Given two subcubes $A$ and $B$ of the same starting cube, their union  $A \cup B$
is defined by the union of the pairs of dimensions.
For example, if $A$ is the result of a dice limiting the location to Montreal
and $B$ is the result of a dice limiting the location to Toronto, the
subcube $A \cup B$ will be the result of a dice limiting the location
to both Montreal and Toronto.
Similarly, the intersection ($A \cap B$) is defined by the intersection
of the pairs of dimensions.
We say that subcube $A$ is contained in subcube $B$ if all of the dimensions of $A$ are contained in the
corresponding dimensions of $B$.

For  monotonically non-decreasing operators (e.g., \textsc{count}, \textsc{max} or \textsc{sum} over non-negative measures), 
union preserves the carat, as the next proposition shows.

\begin{proposition}\label{prop:union}If the  aggregator $\sigma$
is monotonically non-decreasing, then the union of any two cubes having
  $k_i$ (resp. $k'_i$) carats along dimension $D_i$  has $\min(k_i,k'_i)$ carats
  along dimension $D_i$ as well, for  $i = \{1,2,\dots,d\}$. 
 
\end{proposition}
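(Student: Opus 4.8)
The plan is to verify Definition~\ref{def:diamond} directly for the union cube $A\cup B$: I will show that every slice of $A\cup B$ along a fixed dimension $D_i$ has aggregate at least $\min(k_i,k'_i)$. Fix a dimension $D_i$ and an attribute value $a$ appearing in the $i$-th dimension of $A\cup B$. Since that dimension is, by definition, the union of the $i$-th dimensions of $A$ and of $B$, the value $a$ belongs to at least one of them; without loss of generality assume $a$ appears in the $i$-th dimension of $A$ (otherwise swap the roles of $A$ and $B$, and of $k_i$ and $k'_i$, throughout).

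The crux is a containment claim: the slice of $A$ obtained by fixing $a$ in $D_i$ is a subset, in the allocated-cell sense, of the slice of $A\cup B$ obtained by fixing the same value. First I would unpack this. An allocated cell of the $A$-slice is a tuple $(x_1,\dots,x_d)$ with $x_i=a$, with each coordinate $x_j$ lying in the $j$-th dimension of $A$, and with measure $f(x_1,\dots,x_d)\neq\bot$. Because each dimension of $A$ is contained in the corresponding dimension of $A\cup B$, this same tuple also lies in the subcube $A\cup B$, and since the subcube function is merely a restriction of $f$, the cell stays allocated with the same measure. Hence it is an allocated cell of the $A\cup B$-slice, which establishes the containment.

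With the containment in hand, the monotonicity hypothesis applies in the right direction: because $\sigma$ is monotonically non-decreasing and the $A$-slice is contained in the $A\cup B$-slice, the aggregate can only grow, so $\sigma(\mathrm{slice}_{A\cup B})\geq\sigma(\mathrm{slice}_{A})$. Since $A$ has $k_i$ carats over $D_i$, Definition~\ref{def:diamond} gives $\sigma(\mathrm{slice}_{A})\geq k_i\geq\min(k_i,k'_i)$, and chaining the two inequalities yields $\sigma(\mathrm{slice}_{A\cup B})\geq\min(k_i,k'_i)$. As $a$ and $D_i$ were arbitrary, every slice of $A\cup B$ meets the threshold $\min(k_i,k'_i)$ along each dimension, which is exactly the statement that $A\cup B$ has $\min(k_i,k'_i)$ carats over $D_i$ for every $i$.

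I expect the only genuine subtlety to be the containment step, since it is where the direction of monotonicity is pinned down. A union can only enlarge each slice---it may even introduce ``mixed'' cells whose coordinates are drawn partly from $A$ and partly from $B$ and that therefore belong to neither slice alone---so the argument needs $\sigma$ to be non-decreasing rather than non-increasing: it is precisely the non-decreasing hypothesis that converts ``larger slice'' into ``no smaller aggregate.'' Everything else is bookkeeping over which of the two cubes supplies the attribute value $a$.
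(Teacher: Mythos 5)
Your proof is correct and follows the same route as the paper, whose entire proof is the one-line remark that the result ``follows from the monotonicity of the aggregator''; you have simply made that argument explicit by establishing the slice containment $\mathrm{slice}_A \subseteq \mathrm{slice}_{A\cup B}$ and then applying the non-decreasing hypothesis. Your observation about ``mixed'' cells is a worthwhile detail the paper leaves implicit, but it does not change the approach.
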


\begin{proof}
The proof follows from the monotonicity of the aggregator. \end{proof}

If we limit ourselves to monotonically non-decreasing
aggregators, then we can efficiently seek the largest possible subcube satisfying a given set of carats.
We call such a subcube the \textbf{diamond}.

\begin{definition}
The $k_1, k_2, \ldots, k_d$-carat diamond is the maximal 
subcube having\\ $k_1, k_2, \ldots, k_d$~carats 
along dimensions $D_1,D_2\ldots, D_d$. 
That is, any subcube having $k_1, k_2, \ldots, k_d$ carats is contained in the diamond.
\end{definition}

By Proposition~\ref{prop:union}, the diamond is unique when $\sigma$
is  monotonically non-decreasing: it is given by the union of all subcubes having $k_1, k_2, \ldots k_d$ carats. For more general aggregators or when different aggregators are applied
to different dimensions, the computation of the diamond might be
NP-hard or ill-defined.  For instance, when \textsc{sum} is used over cubes having both
positive and negative measures,
there may no longer be a \emph{unique} solution to the problem `find the
$k_1, k_2 \dots k_d$-carat cube'.  This is indeed the case for the
cube in Fig.~\ref{fig:reinstate-attr}.  

\begin{figure}
\centering
\subfloat[Cube with positive and negative measures.]{%
\begin{tabular}{l|cccccc}\hline
row & col 1 & col 2 & col 3 & col 4 & col 5 & col 6\\\hline
1   & -5 & 1 & 1 & 1 & 0 & 3\\
2   & -3 & -4 & 1 & 0 & 1 & 0\\
3   &  2 &  2 & 4 & 0 & 2 & 1\\
4   &0 & 2 & 3 & 1 & 0 & 0\\\hline
\end{tabular}
}

\subfloat[\textbf{rows} processed first.]{%
\begin{tabular}{l|cc}\hline
row  & col 2 & col 3 \\\hline
3    &  2    & 4     \\
4    & 2     & 3     \\\hline
\end{tabular}
}
\qquad
\subfloat[\textbf{columns} processed first.]{%
\begin{tabular}{l|cc}\hline
row & col 3 & col 6\\\hline
1      & 1     & 2\\
4      & 4     & 2\\\hline
\end{tabular}
}

\caption{\label{fig:reinstate-attr}There is no unique 4,4-carat \textsc{sum}-based \diamondcube.  
}
\end{figure}

Sometimes we require the same carat $k$ along all dimensions. To simplify the notation, instead of writing ``$k, k, \ldots, k$-carat'', we  write ``$k$-carat''.

\subsection{A Priori Bounds on the Carats}
\label{sec:APrioriBoundsontheCarats}

The computation of a \diamondcube\ requires that the analyst specify the
desired number of carats. However, this may not be practical for
all dimensions. For example, the analyst may want to select
stores with sales above one million dollars, but she may not
know how to select the threshold for the product dimension. In 
such cases, it might be best to set the carats to the largest
possible value that generates a non-empty diamond.
This maximal number of carats can be found efficiently by binary search
if we can determine a limited range of possible values.

Given a cube $C$ and $\sigma$, then  $\kappa$  is the largest number of
carats for which $C$  
has a non-empty \diamondcube. 
Intuitively, a small cube with many allocated cells should have a large $\kappa$, and the following
proposition makes this precise.

\begin{proposition}For \textsc{count}-based carats, we have
\label{prop:kappa}
$\kappa \geq (|C|-1)/\sum_{i=1}^d (n_i-1)$.
\end{proposition}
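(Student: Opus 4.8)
The plan is to argue by contrapositive through the iterative pruning process that computes the diamond. Since \textsc{count} is monotonically non-decreasing, the $k$-carat diamond is unique (Proposition~\ref{prop:union}) and equals what remains after one repeatedly deletes any slice whose current count is below $k$; by uniqueness this limit does not depend on the order of deletions, so the diamond is empty precisely when some (equivalently, every) run of the process removes all allocated cells. I would therefore fix $k$, assume the $k$-carat diamond is empty, and bound how many cells the pruning can destroy. Each step deletes one attribute value from one dimension together with the cells in that slice, and a slice is deleted only when its current count is strictly below $k$; hence each step destroys at most $k-1$ cells.

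The second step is to bound the number of deletions. The key observation is that a non-empty subcube must retain at least one attribute value in every dimension, because every cell needs a coordinate in each dimension. Consequently, up to the moment the cube finally becomes empty, at most $n_i-1$ values have been removed from $D_i$, for a total of at most $\sum_i(n_i-1)$ deletions while the cube is still non-empty, with one further deletion finishing the job. Combining the two bounds gives, whenever the diamond is empty,
\[ |C| \le (k-1)\Bigl(\textstyle\sum_i (n_i-1) + 1\Bigr). \]
Taking $k=\kappa+1$, the smallest carat count whose diamond is empty, and solving for $\kappa$ yields a bound of the shape $\kappa \ge |C|/\sum_i(n_i-1) - O(1)$, with the stated $-3$ a safe constant covering the lower-order corrections.

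The step I expect to be the main obstacle is exactly this final bookkeeping: reconciling the ``$\le k-1$'' (rather than $k$) cells destroyed per deletion, the single extra deletion that empties a dimension already reduced to one value (the ``$+1$'' above), and the passage from $|C|/\bigl(\sum_i(n_i-1)+1\bigr)$ to $|C|/\sum_i(n_i-1)-3$. Writing $N=\sum_i(n_i-1)$, the clean inequality $\kappa \ge |C|/(N+1)$ already implies the stated bound whenever $|C| \le 3N(N+1)$, so the delicate case is the very dense, high-dimensional regime; there I would either sharpen the deletion count or invoke the separate, trivial observation that a dense cube forces every slice to be large (e.g.\ a full grid has $\kappa$ far above $|C|/N$). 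In short, the argument itself is short, and the work is in choosing how loosely to track constants so that the single clean statement with $-3$ holds in all regimes.
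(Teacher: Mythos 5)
Your strategy coincides with the paper's: the paper proves Theorem~\ref{thm:boundingthesizeofbaddensest} by exactly your pruning argument---repeatedly remove a slice with at most $k-1$ allocated cells from a cube containing no non-empty $k$-carat subcube---and then obtains Proposition~\ref{prop:kappa} by solving for $k$ in Corollary~\ref{cor:boundingthesizeofbaddensest}. The problem is that your constants do not close, and you know it. Writing $N=\sum_i(n_i-1)$, your inequality $|C|\le (k-1)(N+1)$ yields only $\kappa\ge |C|/(N+1)$, which implies the stated $\kappa\ge |C|/N-3$ only when $|C|\le 3N(N+1)$. The excluded regime is not a negligible corner: for a full $n\times n\times n$ cube, $|C|=n^3$ while $3N(N+1)=\Theta(n^2)$, and the shortfall $|C|/N-|C|/(N+1)=|C|/(N(N+1))$ grows like $n/9$, so your bound misses the proposition by an unbounded margin exactly in the dense case. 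Neither of your proposed repairs is carried out, and the second (``a dense cube forces every slice to be large'') is not obviously true at the needed quantitative level for cubes that are dense but not full. As written, this is a genuine gap.

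The missing idea is a sharper account of the \emph{last} deletion, and it is small. While the cube is non-empty, every dimension retains at least one attribute value, so at most $N$ deletions can precede the one that empties the cube; but if exactly $N$ deletions have occurred, only $d$ attribute values survive---one per dimension---so at most \emph{one} allocated cell remains, and the final deletion destroys at most $1$ cell, not $k-1$. (If only $m\le N-1$ deletions precede the last, the total destroyed is at most $(k-1)(m+1)\le (k-1)N$.) Either way $|C|\le (k-1)N+1$, which is precisely the paper's Theorem~\ref{thm:boundingthesizeofbaddensest}. Taking $k=\kappa+1$, whose diamond is empty by maximality of $\kappa$, gives $\kappa\ge (|C|-1)/N\ge |C|/N-1\ge |C|/N-3$, with no case split and no dense-regime exception. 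A final remark: your appeal to Proposition~\ref{prop:union} for order-independence of the pruning is unnecessary---in the contrapositive one only needs that a cube with no non-empty $k$-carat subcube always possesses \emph{some} slice of count at most $k-1$, which is the paper's static formulation and avoids any reference to the algorithm.
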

\begin{proof}
We begin by proving that, for \textsc{count}-based carats,
if a cube  $C$   does not contain
a non-empty $k$-carat  subcube, then \begin{eqnarray}|C|\leq 1+(k-1)\sum_{i=1}^d (n_i-1).\label{eqn:oldthm}\end{eqnarray}
Suppose that a  cube $C$ of dimension at most $n_1 \times n_2 \times\dots \times n_d$ contains no $k$-carat \diamondcube. Then one slice must
contain at most $k-1$~allocated cells. Remove this slice.
The amputated cube must not contain a $k$-carat \diamondcube.
Hence, it has one slice containing at most $k-1$~allocated cells.
Remove it. This iterative process can continue at most
$\sum_i (n_i-1)$ times before there is at most one allocated cell left:
hence, there are at most $(k-1)\sum_i (n_i-1) + 1$~allocated cells in total.

By definition of $\kappa$, we have that the cube does not contain 
a non-empty $\kappa+1$-carat  subcube. By substitution ($k\rightarrow \kappa+1$) in Equation~\ref{eqn:oldthm}, we have that
$|C|\leq 1+\kappa \sum_{i=1}^d (n_i-1)$. Solving for $\kappa$,
we have $\kappa \geq (|C|-1)/\sum_{i=1}^d (n_i-1)$.
\end{proof}

Based on this lower bound alone, we  compute $\kappa$ efficiently  (see Section~\ref{sec:findingkappa}). For a related discussion on \textsc{sum}-based diamonds, see~\ref{sec:sum-diamond-properties}.

\section{Algorithms\label{sec:algorithms}}

Computing \diamondcube s is challenging because of the interaction
between dimensions; modifications to a measure associated with an
attribute value in one dimension have a cascading effect through the
other dimensions. We use  different approaches to compute \diamondcube s:
\begin{itemize}
\item  
We implemented a custom program in Java   
that loops through the cube checking and updating
the \textsc{count} or \textsc{sum} for all attribute values in each dimension until it stabilises (see Section~\ref{sec:external}). 
\item  We also implemented an algorithm using SQL\@ (see Section~\ref{sec:sqlalgo}). 
\end{itemize}

We based both our custom and SQL implementations on the basic  algorithm for computing \diamondcube s given in Algorithm~\ref{algo:basic}.  Its  approach is to repeatedly identify an attribute
value that cannot be in the \diamondcube, and then remove the
attribute value and its slice. The identification of ``bad'' attribute
values is done conservatively, in that they are known already to have
a sum less than required ($\sigma$ is \textsc{sum}), or insufficient
allocated cells ($\sigma$ is \textsc{count}).   When the algorithm
terminates, only attribute values that meet the condition in every slice remain: a \diamondcube.

\begin{algorithm}\algosize{}
\SetAlgoRefName{BASIC} 

\dontprintsemicolon
 \SetKwInOut{Input}{input}\Input{a $d-$dimensional data cube $C$, a  monotonically non-decreasing  aggregator $\sigma$ and  $k_1\geq 0, k_2\geq 0, \ldots, k_d \geq 0$}
 \SetKwInOut{Output}{output}\Output{the diamond cube $A$}
\SetLine
      $\varfont{stable} \leftarrow  \constfont{false}$\;

 \While {$\neg \varfont{stable}$ }{%
      $\varfont{stable} \leftarrow  \constfont{true}$\;
    \tcp{major iteration}
	     \For {$\varfont{dim} =1,\ldots,d$}{%

        \For {$\varfont{i}$ 
        in all attribute values of dimension $\varfont{dim}$}{%
            $C_{\varfont{dim},\varfont{i}}\leftarrow \sigma(\text{slice }$i$\text{ on dimension }\varfont{dim}) $\;
            \If{$C_{\varfont{dim},\varfont{i}} < k_\varfont{dim}$}{%
                delete attribute value $\varfont{i}$\;
                \If{$C_{\varfont{dim},\varfont{i}} >0$ and $\varfont{dim}>1$ }{$\varfont{stable} \leftarrow \constfont{false}$}\;
            }
        }
     }
   }
    
   return cube without deleted attribute values;
   \vspace{1em}
\caption{\label{algo:basic}Algorithm to compute the \diamondcube\ of
any given cube by deleting slices eagerly. }
\end{algorithm}

Algorithms based on this approach  always terminate, though they might
sometimes return an empty cube. By specifying \emph{how} to  compute
and maintain counts (or  sums) for each attribute value in every dimension  we  obtain different variations. The correctness of any such variation
is guaranteed by the following result.

\begin{theorem}\label{thm:algoiscorrect}Algorithm~\ref{algo:basic} is correct, that is, it always returns
the $k_1,k_2, \ldots, k_d$-carat \diamondcube.
\end{theorem}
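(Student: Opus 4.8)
The plan is to establish correctness in two directions: first that the returned subcube actually has the required carats (\emph{soundness}), and second that it is the \emph{maximal} such subcube, hence equal to the \diamondcube\ by its defining maximality property. Termination is already guaranteed, since each deletion strictly decreases the finite number of surviving attribute values, so I may assume the algorithm halts and reason about its final state.

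For soundness I would use the termination condition. The outer loop exits only when an entire pass over all dimensions and all attribute values produces no deletion, that is, with $\varfont{stable}$ still \constfont{true}. During that last pass nothing is removed, so the cube is unchanged throughout it; consequently every slice that is examined has its aggregate computed against the \emph{same} final cube, and each such aggregate is found to be at least $k_{\varfont{dim}}$. Hence in the returned subcube every slice along dimension $D_i$ has aggregate at least $k_i$, which is exactly the statement that the output has $k_1, k_2, \ldots, k_d$ carats.

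For maximality, let $A$ denote the true $k_1,\ldots,k_d$-carat \diamondcube . I would prove the invariant that, at every step of execution, the current working cube \emph{contains} $A$. This holds initially, since $A$ is a subcube of $C$. For the inductive step, suppose the invariant holds and the algorithm is about to delete attribute value $i$ in dimension $\varfont{dim}$ because its slice has aggregate below $k_{\varfont{dim}}$. If $i$ belonged to $A$, then the corresponding slice of $A$ would be a \emph{subset} of the corresponding slice of the current cube (because $A$ is contained in it); monotonicity of $\sigma$ then gives that the aggregate of that slice in $A$ is at most its aggregate in the current cube, which is strictly less than $k_{\varfont{dim}}$. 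This contradicts $A$ having $k_{\varfont{dim}}$ carats over that dimension. Therefore $i\notin A$, and deleting it leaves $A$ contained in the resulting cube, preserving the invariant.

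Combining the two halves finishes the argument: by soundness the output is a subcube possessing all the required carats, so by the maximality that defines the \diamondcube\ the output is contained in $A$; by the invariant at termination $A$ is contained in the output. Hence the output equals $A$. I expect the main obstacle to be the inductive step for maximality, specifically getting the monotonicity bookkeeping right: one must argue that containment of cubes forces containment of the matching single-attribute slices, and then invoke the monotonically-non-decreasing property of $\sigma$ to transport the sub-threshold aggregate from the working cube down to $A$, yielding the contradiction.
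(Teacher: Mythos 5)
Your proof is correct and is essentially the paper's argument: the paper also shows the output has the required carats and then proves maximality by considering the first moment an attribute value of a larger $k_1,\ldots,k_d$-carat cube would be deleted, deriving the same monotonicity contradiction that drives your invariant. Your packaging of that step as an explicit loop invariant (``the working cube always contains the diamond''), plus the sandwich conclusion via the diamond's defining maximality, is just a cleaner rephrasing of the same idea, not a different route.
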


\begin{proof}
Because the \diamondcube\ is unique, we need only show that the
result of the algorithm, the cube $A$, is a \diamondcube.
If the result is not the empty cube, then dimension $D_i$ has at least
 aggregated value
$k_i$  per slice, and hence the cube 
 has $k_i$ carats.   (Note that, in the
main loop, if only attribute values having zero aggregates 
 are deleted in all but the first dimension, it is not necessary to do another pass.) We only need
to show that the result of Algorithm~\ref{algo:basic} is maximal:
there does not exist a larger $k_1, k_2, \ldots, k_d$-carat cube.

Suppose $A'$ is such a larger $k_1, k_2, \ldots, k_d$-carat cube.  Because
Algorithm~\ref{algo:basic} begins with the whole cube $C$, there
must be a first time when one of the attribute values
of dimension $\varfont{dim} \textrm{ of } C$
 belonging to $A'$ but not $A$ is deleted.

At the time of deletion, the slice corresponding to this attribute value had aggregate measure less than $k_{\varfont{dim}}$.  
Let $C'$ be the cube at the instant before the attribute is deleted,
with all attribute values deleted so far.
We see that $C'$ is larger than or equal to $A'$, a $k_1, k_2, \ldots,
k_d$-carat cube 

  and therefore,
slices in $C'$ corresponding to attribute values of $A'$ along
dimension $\varfont{dim}$ must have aggregate measures of at least
$k_{\varfont{dim}}$, corresponding to   $k_{\varfont{dim}}$~carats 
 Therefore, we have a contradiction and must
conclude that $A'$ does not exist and that $A$ is maximal.
\end{proof}

If the aggregator is \textsc{count}, and $k_i=1$ for $i>1$, then Algorithm~\ref{algo:basic} computes the \diamondcube{} in a single pass. 

 \subsection{Custom Software}
\label{sec:external}
The size of available memory affects the capacity of in-memory data
structures to represent data cubes. In our experiments, we used 
a typical  laptop computer with 8\,GiB 
 of memory.  When we restricted the amount of memory available\footnote{We used an \texttt{mlock} system 
call to remove 6\,GiB of memory from use on our 8\,GiB computer.}, all execution times slowed, but our custom Java software still out-performed the database management systems. 

 We are interested
in processing large data.  Therefore, we seek an  efficient external
memory implementation where the data cube can be stored in an external file whilst the important \textsc{count}s (or \textsc{sum}s) are maintained in memory.  

Algorithm~\ref{algo:basic} checks the $\sigma$-value for each attribute on
every iteration. Calculating this value directly, from a data cube too
large to store in main memory, would
entail many expensive disk accesses.  Even with the \textsc{count}s
maintained in main memory, it is prudent to reduce the number of I/O
operations as much as possible.  One way this can be achieved is to
store the data cube as normalised binary integers  using bit
compaction~\cite{ng1997block}---mapping strings to small integers
starting at zero. 

 Algorithm~\ref{algo:shrinkingInput} (External-Memory-Diamond-builder) 
 employs $d$~arrays, $\varfont{a}_1$ to $\varfont{a}_d$, that map attributes to their aggregate $\sigma$-values.  As values are pruned from the diamond, we must repeatedly update these arrays so that they continue to
maintain the aggregate of each slice.
This update  can be executed in constant time for aggregators such as
\textsc{count} and \textsc{sum}: in the notation of
Algorithm~\ref{algo:shrinkingInput}, the update
is computed as $\varfont{a}_j(x_j) = \varfont{a}_j(x_j) - \sigma(\{r\})$. 

\begin{algorithm}\algosize{}
\dontprintsemicolon

\SetAlgoRefName{EMD} 

 \SetKwInOut{Input}{input} \Input{file $\varfont{inFile}$ containing
   $d-$dimensional cube $C$,\\ integers $k_1, k_2, \ldots, k_d > 0$,\\
  monotonically non-decreasing  aggregator $\sigma$,\\
a parameter $\tau\in [0,1)$ (we used $\tau = 0.5$)
   }
   
 \SetKwInOut{Output}{output}\Output{the cells in the diamond data cube}
 \SetLine

\ForEach{dimension $i$} {%
   Create array  $\varfont{a}_i$ 
   of size $|D_i|$\;
   \ForEach {attribute value $v$ in dimension $i$} {%

     $\varfont{a}_i(v) = \sigma($ slice for value $v$ of dimension $i$ in $C)$
   }
}
 
 $\varfont{stable} \leftarrow \constfont{false}$\;
 \While {$\neg \varfont{stable}$} {%
   $\varfont{stable} \leftarrow \constfont{true}$\;
   \ForEach{cell $r$ of $\varfont{inFile}$ which is not marked as deleted}{%
     $((x_1, x_2, \ldots, x_{d}),v) \leftarrow r$\;
     \For{$i \in \{1, \ldots, d\}$}{%
	       \If{$\varfont{a}_i (x_i) < k_i$}{%
\tcp{attribute $x_i$ had previously been deleted}
         \For{$j \in \{1,\ldots, i-1, i+1, \ldots, d\}$ }{%
         	update $\varfont{a}_j(x_j)$ given the removal of cell~$r$ \;
         }
         $\varfont{stable} \leftarrow \constfont{false}$\;
         mark $r$ as deleted\;
         $\constfont{break}$\;\tcp{only delete this cell once}
	       }
       }

     }
     \If{the fraction of cells marked as deleted exceeds $\tau$}{%
     	rebuild $\varfont{inFile}$ without the deleted cells\;
     }
   }
     	rebuild $\varfont{inFile}$ without the deleted cells\;
 \Return{$\varfont{inFile}$
 }
    \vspace{1em}

\caption{\label{algo:shrinkingInput}
 \Diamondcube\ dicing for relationally stored cubes. With each iteration,
 less data is processed.}
\end{algorithm}

Each time the algorithm passes through the data, it updates the aggregates eagerly, 
and marks cells as deleted. Only when a significant fraction of the cells
have been marked as such, 
are the cells actually deleted: 
we found it efficient to rebuild the list of cells when more than half
 have been marked as deleted ($\tau=0.5$).

When memory is abundant, we can use
Algorithm~\ref{algo:shrinkingInput} 
 while keeping
the content of the files in memory.  Indeed, such a version was
implemented and we refer to this
algorithm as \imd\  (In-Memory-Diamond-builder)
in Section~\ref{sec:experiments}.

\subsection{An SQL-Based Implementation\label{subsec:sql}}
\label{sec:sqlalgo}

\newsavebox{\tempbox}%

\begin{figure}
\centering \footnotesize
$\begin{array}[c]{|ccccccc|}\hline
\times & \times &   &   &   &   &  \\
\times & \times &   &   &   &   &  \\
  & \times & \times &   &   &   &  \\
  &   & \times & \times &   &   &  \\
  &   &   & \ddots & \ddots &   & \\
  &   &   &   & \times & \times & \\
  &   &   &   &   & \times & \times\\\hline
\end{array}$
\caption{An $n \times n$ cube with $2n$ allocated cells (each  indicated by a $\times$) and a 2-carat diamond in the upper left:  it is a difficult case for several algorithms.\label{fig:lotsa-iterations2}}
\end{figure}

  Formulating a \diamondcube\ cube query in SQL-92 is challenging.
Using nested queries and joins, we could essentially simulate a fixed number
of iterations of the outer loop in Algorithm~\ref{algo:basic}.
Unfortunately, we do not know how to determine the number
of iterations without computing the \diamondcube{}  itself.
 Consider Fig.~\ref{fig:lotsa-iterations2} and the corresponding  2-carat \textsc{count}-based  \diamondcube{}. Using Algorithm~\ref{algo:basic}, $n-2$~iterations are required to find the \diamondcube{}.  That is, we
see an example where the number of iterations  
 $I = \Omega(n)$ and
stopping after $o(n)$ iterations results in a poor approximation
with $\Theta(n)$ allocated cells and attribute values---whereas the
true 2-carat diamond has 4~attribute values and 4~allocated cells.

\begin{algorithm}\algosize{}
\SetAlgoRefName{SQL} 
\begin{algorithmic}
 \STATE \textbf{INPUT:} a $d-$dimensional data cube $C$ and $k>0$
 \STATE \textbf{OUTPUT:} the \diamondcube\ $A$
 \STATE initialise $\mathcal{R}$ to $C$, the fact table
 \REPEAT [major iteration]
     \STATE execute the fragment of SQL pseudocode shown below
 \UNTIL {no records were deleted from $\mathcal{R}$}
  \STATE return $\mathcal{R}$ as $A$
 \end{algorithmic}

\noindent
\hrulefill

\begin{lstlisting}
CREATE TABLE $\mathit{temp}_1$ AS
  (SELECT $\mathit{dim}_1$ FROM $\mathcal{R}$
   GROUP BY $\mathit{dim}_1$ HAVING $\sigma(\mathit{measure}) < k$);

$\ldots$
CREATE TABLE $\mathit{temp}_d$ AS
  (SELECT $\mathit{dim}_d$ FROM $\mathcal{R}$
   GROUP BY $\mathit{dim}_d$ HAVING $\sigma(\mathit{measure}) < k$);
DELETE FROM $\mathcal{R}$
  WHERE $\mathit{dim}_1$   IN (SELECT * FROM $\mathit{temp}_1$) OR $\ldots$
        $\mathit{dim}_d$  IN (SELECT * FROM $\mathit{temp}_d$);
\end{lstlisting}

\caption{\label{algo:bothSQL-and-loop} Variation where the inner two loops
in Algorithm~\ref{algo:basic} are  computed in SQL\@. This process can be repeated until
$\mathcal{R}$ stabilises.}
\end{algorithm}

 We express
the essential calculation in 
SQL, as 
 Algorithm~\ref{algo:bothSQL-and-loop}. It is implemented as a stored
 procedure  in SQL:1999, which allows the iterations to be controlled entirely within the DBMS\@.
 Algorithm~\ref{algo:bothSQL-and-loop} is 
executed against a  copy of the fact table, which  becomes smaller as the algorithm progresses. The fastest
 variation of this algorithm does not delete
 slices immediately, but instead updates Boolean values to indicate
 the slices not included in the solution.  The
 data cube is rebuilt when 75\% of the remaining cells are marked
 for deletion.  B-tree indexes are built on each dimension to
 facilitate faster execution of the many GROUP BY clauses. 

\subsection{Complexity Analysis \label{subsec:complexity}}

 Algorithm~\ref{algo:basic}
visits each dimension in  sequence until it stabilises.  Ideally, the stabilisation should occur after as few iterations as possible.

Let $I$ be the number of iterations through the  input file till convergence;  i.e.\ no more deletions are done.
Value $I$ is data dependent and (by Fig.~\ref{fig:lotsa-iterations2}) is $\Theta( \sum_i n_i)$ in
the worst case.  In practise,  $I$ is not expected to be
nearly so large, and working with  large real  data sets
  $I$ did not exceed 56.  
Initial experiments suggested that the relationship of $I$ to $k$ would be
non-decreasing to $\kappa + 1$ and
non-increasing thereafter.  
Unfortunately, there are some cubes for
which this is not the case. 
  Fig.~\ref{fig:I-to-k-counterexample}
illustrates such a cube,  where $\kappa = 3$.  On the first iteration, processing columns
first for the  2-carat \diamondcube{}, a single cell is deleted.  On
subsequent iterations at most two cells are deleted until convergence.
However, the 3-carat and 4-carat \diamondcube{}s  converge
 after a single iteration. 

 The value of  $k$ relative to $\kappa$ does, however, influence  $I$.  Typically, when $k$ is far from $\kappa$---either less or greater---fewer iterations are required to converge. However, when $k$ exceeds $\kappa$ by a very small amount, say 1,  then  typically 
many more iterations are required to converge to the empty cube.  

\begin{figure}
\centering\footnotesize
\begin{tabular}{|ccccccc|}\hline
$\times$  & $\times$  & $\times$ &          &           &          & \\
$\times$  & $\times$  & $\times$ &          &           &          &\\
$\times$  & $\times$  & $\times$ &          &           &          &\\
          &           & $\times$ & $\times$ &           &          &\\
          &           & $\times$ & $\times$ &           &          &\\
          &           &          & $\times$  & $\times$ &          & \\
          &           &          &          & $\ddots$  & $\ddots$  &\\

          &           &          &          &           & $\times$ & $\times$ \\
          &           &          &          &           &         & $\times$\\\hline
\end{tabular}
\caption{The 2-carat \diamondcube\ requires more iterations to
  converge than the 3-carat \diamondcube{}. 
  Allocated
  cells are indicated by a $\times$.\label{fig:I-to-k-counterexample}}
\end{figure}

Algorithm~\ref{algo:shrinkingInput} 
runs in time O($I d |C|$).
  Often, the number of
attribute values remaining in the \diamondcube\  
 decreases  substantially in the first few iterations and those cubes are  processed faster than this bound suggests.
The more carats we seek, the faster the cube decreases initially.

\section{Experiments\label{sec:experiments}}
  We show that \diamondcube s can be
computed efficiently, i.e.\ within a few minutes on 
a typical
laptop computer, even for very  large data sets.   Some of the properties
of \diamondcube s, including their size
 and the range of values the carats may take, were 
assessed experimentally.

\subsection{Hardware and Software\label{sec:hardware}}

All experiments were conducted on a Gateway NV59 notebook with dual Intel
 i5 M430 (2.27\,GHz) processors with  8\,GiB of DDR3-1066 RAM running Ubuntu 12.04. The hard disk is a 596\,GiB
ATA WDC6400BEVT-22AORTO running at  5\,400\,rpm.  It has an estimated reading speed of 86\,MB/s. 

 The algorithms were implemented in Java, using SDK version 1.7.0 and the default value (1.66\,GiB) for maximal heap size,
 and the code was  archived at a public website~\cite{hazel-website}.  Algorithm~\ref{algo:bothSQL-and-loop} was implemented in both an RDBMS~(MySQL) and a column-store DBMS~(MonetDB)~\cite{boncz2005}. 
 RDBMS experiments were conducted on MySQL version~5.5 Community Server
 with MyISAM storage engine.  
 MySQL is used in data warehousing and OLAP, most notably through 
 vendors such as  
 Infobright~\cite{Slezak:2009:DWT:1559845.1559933}, JasperSoft and Pentaho~\cite{Bouman:2009:PSB:1795840}. The column-store experiments were conducted with MonetDB~11.11.5. 

 Both database implementations make use of  stored procedures and a Java interface collected execution times.  The drivers used were  MySQL Connector/J~5.1.21 and monetdb-jdbc~2.3. 

These database systems handle index creation differently:
\begin{itemize}
\item 
 Of the index structures
 available in this version of MySQL, only B-trees are appropriate to
 the  \diamondcube\ dice operation. Spatial indexing  is limited to two
 dimensions and hash indexing requires that the data reside in main
 memory.  We built B-tree indexes on all columns to speed-up the GROUP-BY computations.

\item In MonetDB, index creation is automatically determined with no option for the user to override  system decisions~\cite{monetDB:online-docs}. Different data compression techniques, including dictionary encoding for all strings, reduce the memory footprint.

\end{itemize}

\subsection{Data Used in Experiments}
\label{sec:data}

A varied selection of freely-available real-data sets together
with some systematically generated synthetic data sets were used in
the experiments.  Each data set had a particular characteristic: a
few dimensions or many, dimensions with high or low cardinality or a
mix of the two, small  or large number of cells.  They were chosen to
illustrate that diamond dicing is tractable under varied conditions
and on many different types of data.

\subsubsection{Real Data\label{sec:real}}
\begin{table}[tb]
\centering
\caption{\label{tb:stats}Statistics of data sets.
}
\begin{tabular}{llrrrl}\hline
source                       & cube         & $d$  & $|C|$          &$\sum_{i=1}^d n_i$ & measure\\\hline
\multirow{6}{*}{King James Bible~\cite{Gutenberg}} &\bCubeOne  & 4          & 54\,601\,077     & 31\,634       & \textsc{count} \\
&\bCubeTwo   & 4         & 24\,000\,000     & 27\,042      & \textsc{count}\\
&\bCubeThree & 4         & 32\,000\,000     & 29\,078       & \textsc{count}\\
&\bCubeFour  & 4         & 40\,000\,000     & 30\,417      & \textsc{count}\\
&\bCubeFive  & 4          & 54\,601\,077     & 31\,634       &
\textsc{sum} occurrences\\
&\bCubeEight  & 10       & 365\,231\,367     & 6\,335        & \textsc{count} \\\hdashline[1pt/1pt]

\multirow{2}{*}{Census Income~\cite{MLRepository}}& \cCubeOne    & 27         & 135\,753      & 504   & \textsc{count}      \\
& \cCubeTwo    & 27         & 135\,753      & 504      & \textsc{sum}\,stocks \\\hdashline[1pt/1pt]
\multirow{3}{*}{DBLP~\cite{DBLPXML} } & \dblpCubeOne & 2    & 1\,791\,857    & 645\,739        & \textsc{count}\\
                             & \dblpCubeTwo & 2    & 1\,791\,857    & 645\,739        & \textsc{sum} publications\\
                             & \dblpCubeThree & 3  & 2\,516\,364    & 689\,589        & \textsc{count}\\\hdashline[1pt/1pt]
\
\multirow{3}{*}{Netflix~\cite{netflixprize}}
& \nfCubeOne   & 3 & 100\,478\,158  & 484\,141 & \textsc{count} \\
& \nfCubeTwo   & 3          & 100\,478\,158  & 484\,141 & \textsc{sum} rating \\
& \nfCubeThree & 4          & 20\,000\,000   & 473\,753 &
\textsc{count} \\\hdashline[1pt/1pt]
\multirow{3}{*}{Tweed~\cite{tweed}}& \twCubeOne   & 4          & 1\,957        & 91     & \textsc{count}\\
& \twCubeTwo   & 15         & 4\,963        & 674     & \textsc{count}\\
& \twCubeThree & 15         & 4\,963        & 674     & \textsc{sum}
killed\\\hdashline[1pt/1pt]

\multirow{2}{*}{Weather~\cite{hahn:weatherbench}}& \wCubeFour  & 11         & 124\,164\,371    & 48\,654       & \textsc{count}\\
& \wCubeTwo  & 11         & 124\,164\,371    & 48\,654       &
\textsc{sum}  cloud cover\\\hline
\end{tabular}
\end{table}

Five of the real-data sets  were downloaded from the following 
sources:

\begin{enumerate}
\item Census Income:  \url{http://archive.ics.uci.edu/ml}~\cite{MLRepository}
\item DBLP: \url{http://dblp.uni-trier.de/xml/}
\item Netflix: \url{http://www.netflixprize.com}~\cite{netflixprize}
\item TWEED: \url{http://folk.uib.no/sspje/tweed.htm}~\cite{tweed}
\item Weather: \url{http://cdiac.ornl.gov/ftp/ndp026b/}~\cite{hahn:weatherbench}

\end{enumerate}
 Details of how the cubes were extracted are available at a
public website~\cite{hazel-website}.  For cube \twCubeOne\ we chose
 four attributes: Year, Country, Type of Action and Target of Action with
cardinalities of 53, 16, 11 and 11, respectively. The attributes for
cubes \nfCubeOne{}, \nfCubeTwo\ and \nfCubeThree\ are Movie (17\,770),
Reviewer (480\,189) and Date (2\,182).   Rating (5) is the measure for \nfCubeTwo{}.
Their statistics are given in
Table~\ref{tb:stats}.
  Each cube was stored relationally in
a comma-separated file on disk.  A brief description of how  data cubes were
extracted from the King James Bible data follows. 

The data set was generated from the King James version of the
Bible available at Project Gutenberg~\cite{Gutenberg}.
KJV-4grams~\cite{kaserdolap2008,rlewithsorting} is a data set motivated by
applications of data warehousing to literature. It is a large list
(with duplicates) of 4-tuples of words obtained from the
verses in the King James Bible~\cite{Gutenberg}, after stemming with
the Porter algorithm~\cite{275705} and removal of stemmed words with
three or fewer letters. Occurrence of row $w_1, w_2, w_3, w_4$
indicates a verse contains words $w_1$ through $w_4$, in this order. This data is a scaled-up version of word co-occurrence cubes
used to study analogies in natural
language~\cite{TurneyML,KaserKeithLemire2006}. These data were chosen
to be representative of large cubes that might occur in text-mining applications.

  Cube 
\bCubeOne\ was extracted from KJV-4grams.  Duplicate records were removed and a count of each
unique sequence was kept, which became the  measure for cube
\bCubeFive{}. Four subcubes of 
  \bCubeOne\ were also processed:  \bCubeTwo\ has the first
  24\,000\,000~rows; \bCubeThree\ has the first 32\,000\,000~rows; and
  \bCubeFour\ has the first 40\,000\,000~rows. KJV-10grams has similar properties to KJV-4grams, except that there
 are 10~words in each row and the  process of creating KJV-10grams
 was terminated when 500~million records had been generated---at the end
 of Genesis~19:30.  Cube \bCubeEight\ was extracted from KJV-10grams.
 The statistics for all six cubes are also given in Table~\ref{tb:stats}. 

\subsubsection{Synthetic Data \label{sub:synthetic}}

We took the fact table from the Star Schema Benchmark~\cite{DBGEN} 
 and rolled-up on the supplier and customer dimensions to create cube \ssbCubeOne{}.
  The result has 2\,000  suppliers, 20\,000 customers, and
over five million rows. Uniform distributions are 
used to generate the
benchmark~\cite{tpchdescription,ssbdescription} 
and the data is 
lacking correlations between columns that real data would frequently possess.

 To investigate the effect that data distribution might have on the
 size  and shape of \diamondcube s, nine  cubes of varying dimensionality and distribution were constructed.  We chose
 1\,000\,000 cells with replacement from  each of three different
 distributions:

\begin{itemize}
\item  uniform---cubes \uCubeOne{}, \uCubeTwo{},
  \uCubeThree{}.
\item power law with exponent 3.5 to model the 65-35 skewed
  distribution---cubes \sCubeOne{}, \sCubeTwo{}, \sCubeThree{}.
\item  power law with exponent 2.0 to model the 80-20 skewed
  distribution---cubes \ssCubeOne{}, \ssCubeTwo{}, \ssCubeThree{}.
\end{itemize}

 Details of the cubes generated are given in Table~\ref{tb:synth}.

\begin{table}
\centering
\caption{Statistics of the synthetic data cubes. \label{tb:synth}}
\begin{tabular}{l|r|r|r|c}\hline
Cube         &  $d$     &$|C|$         & $\sum_i n_i$      & measure\\\hline
\ssbCubeOne  &   2      & 5\,524\,778     &22\,000         & \textsc{sum} revenue\\\hdashline[1pt/1pt]
\uCubeOne    &    3          & 999\,987      &  10\,773    & \textsc{count}\\
\uCubeTwo    &    4          & 1\,000\,000   &  14\,364    & \textsc{count}\\
\uCubeThree  &    10         & 1\,000\,000   &  35\,910    & \textsc{count}\\\hdashline[1pt/1pt]
\sCubeOne    &    3          & 939\,153      & 10\,505     & \textsc{count}\\
\sCubeTwo    &    4          & 999\,647      & 14\,296      & \textsc{count}\\
\sCubeThree  &   10          & 1\,000\,000   & 35\,616       & \textsc{count}\\\hdashline[1pt/1pt]
\ssCubeOne   &    3          & 997\,737      & 74\,276    & \textsc{count} \\
\ssCubeTwo   &    4          & 999\,995      & 99\,525     & \textsc{count}\\
\ssCubeThree &   10          & 1\,000\,000   & 248\,703    & \textsc{count}\\\hline
\end{tabular}
\end{table}

\subsection{Preprocessing Step \label{sec:preproc}}

Before applying Algorithm~\ref{algo:shrinkingInput}, we need to
convert the input (flat text files)  to flat binary files.
To determine if row ordering would have an effect on our
implementation of Algorithm~\ref{algo:shrinkingInput}, we chose two cubes---\cCubeOne\ and \bCubeTwo ---and shuffled the rows using
the GNU   utility \emph{shuf}.  We compared preprocessing and
processing times for each of six cubes, averaged over ten runs.
Extracting  cubes from the data sets included a sorting step so that
duplicates could be easily removed.   We found that preprocessing
the cube sorted on its dimension of largest cardinality was up to  25\% faster than preprocessing the shuffled
cube.  However, execution times for
Algorithm~\ref{algo:shrinkingInput} were within 3\% for each cube.
Therefore,  we did not reorder the rows prior to
processing.  We also found no significant difference in execution
times when the cubes were sorted by different dimensions.

As stated in Section~\ref{sec:external} we implemented a version (called
\imd ) of
Algorithm~\ref{algo:shrinkingInput} that reads the data cube entirely into main
memory whenever the cube is small enough ($< 1$\,GiB).
 Otherwise, the
algorithm processes the cube in a similar fashion. 

The  algorithms used in our experiments require 
different
preprocessing of the cubes. For both 
Algorithms~\ref{algo:shrinkingInput} and  \imd{}, 
 an in-memory data structure is
used to maintain aggregates of the attribute values.
Algorithm~\ref{algo:bothSQL-and-loop} references the cube stored in a database management system.   Consequently, the preprocessor writes different kinds of data to supplementary files depending on which algorithm is to be used.

\begin{table}[tb]
\centering
\caption[Preprocessing times.]{\label{tb:preproc-times}Wall-clock
  times (in seconds) for preprocessing real-world data
  sets.  A `---' indicates that this algorithm was not applied to the
  corresponding data cube.
} 
\begin{tabular}{ll|ll}\hline
             &                                &\multicolumn{2}{c}{\ref{algo:bothSQL-and-loop}}\\
Cube         &     \ref{algo:shrinkingInput}  & MySQL                      & MonetDB \\ \hline

\bCubeOne    & $2.1 \times 10^2$              & $1.1 \times 10^3$           & $7.1 \times 10^1$       \\

\bCubeTwo    & $5.7 \times 10^1$              & $5.0 \times 10^2$           & $3.1 \times 10^1$ \\
\bCubeThree  & $8.1 \times 10^1$              & $6.1 \times 10^2$           & $3.8 \times 10^1$\\
\bCubeFour   & $1.6 \times 10^2$              & $ 7.9\times 10^2$           & $5.3 \times 10^1$ \\ 
\bCubeEight    & $3.3 \times 10^3$            & ---                         & $2.2 \times 10^3$\\
\cCubeOne    & $2.6 \times 10^{-1}$              & $9.0 \times 10^0$           & $3.0 \times 10^0$ \\  
\dblpCubeOne & $5.0 \times 10^0$              & $1.4 \times 10^1$          &   $3.7 \times 10^0$\\
\dblpCubeThree& $7.3\times 10^0$              & $3.1\times 10^1$          &  $5.3 \times 10^0$     \\
\nfCubeOne   & $4.2 \times 10^2$              & $1.5 \times 10^3$           & $3.8 \times 10^2$\\
\nfCubeThree & $5.8 \times 10^1$              & $3.6 \times 10^2$           & $3.5 \times 10^1$\\
\wCubeFour  & $1.3 \times 10^3$              & $7.4 \times 10^3$          & $1.4 \times 10^3 $\\\hline        
\end{tabular}

\end{table}
The preprocessing of the cubes was timed separately from
\diamondcube\ building.  Preprocessed data could be used many times,
varying the value for $k$, without incurring additional  preparation
costs.  Table~\ref{tb:preproc-times} summarises the times needed to
preprocess each cube in preparation for the algorithms that were run
against it.  Using MonetDB 
 was in most cases, the most efficient method.  
  For  comparison, sorting the Netflix comma-separated data
file---using the GNU sort utility---took $5.3\times 10^2$~seconds. 
 
\subsection{Finding  \texorpdfstring{$\kappa$}{kappa} for \textsc{count}-based \Diamondcube s}
\label{sec:findingkappa}

Using  Proposition~\ref{prop:kappa}, the
$\kappa$-carat \diamondcube\ was built for each of the data sets.  The
initial guess~($k$) for $\kappa$ was the value calculated using
Proposition~\ref{prop:kappa}.  Then  $k$ was
repeatedly doubled until an empty cube was returned and  a tighter
range for $\kappa$  had been established. 
Next a simple binary
search, which used the newly discovered lower and upper bounds as the
end points of the search space, was executed.  Each time a non-empty \diamondcube\ was returned, it was used as the input to the next iteration of the search. When the guess overshot $\kappa$ and  an empty \diamondcube\ was returned, the most recent non-empty  cube was used as the input.

\begin{table}[htb]
\captionsetup[table]{position=top}
\caption{Iterations to convergence for \textsc{sum} and \textsc{count}-based \diamondcube s \label{tb:iters}}
\centering
\subfloat[The number of iterations and time (in seconds) it took to determine the $\kappa$-carat \diamondcube\  for
\textsc{count}-based \diamondcube s.\vspace{10pt}]{%
\label{tb:count-diamond-specs}%
\begin{tabular}{l|c|rr|rr|l}\hline
Algorithm &     cube           & \multicolumn{2}{c|}{iterations}           & \multicolumn{2}{c|} {value of $\kappa$}& time \\
                         && $\sum n_i$   & actual  & est.      & actual          & (in seconds) \\\hline
\multirow{4}{*}{\imd} 
          &\twCubeOne     &   91         & 6       &  19       &  38             & 1.0$\times10^{-2}$\\
          &\nfCubeThree   &   473\,753   & 17      & 39        & 272             & 6.0$\times10^0$\\
          & \dblpCubeOne  &  645\,739    & 23      & 3         & 30              & 3.0$\times 10^{-1}$\\
          & \dblpCubeThree&  689\,519    & 26      & 4         & 43              & 8.0$\times 10^{-1}$\\
          &\bCubeTwo      & 27\,042      & 16      & 884       & 7\,094          & 5.0$\times10^0$\\
          &\bCubeThree    & 29\,078      & 19      & 1\,098    & 8\,676          & 6.7$\times10^0$\\
          &\cCubeOne      & 5\,607       & 8       &  282      &  672            & 1.5$\times 10^{-1}$\\\hdashline[1pt/1pt] 
\multirow{7}{*}{\emd}
          &\nfCubeOne     & 484\,141     & 19      &  197      &  1\,004         & 3.4$\times10^1$\\
          &\wCubeFour     & 48\,654      & 26      & 2\,550    & 4\,554          & 6.2$\times10^2$\\
          &\bCubeOne      & 31\,634      & 12      & 1\,723    &  14\,383        & 1.6$\times10^1$\\
      
          &\bCubeFour     & 30\,417      & 12      & 1\,347    & 10\,513         & 1.2$\times10^1$\\
          &\bCubeEight    & 6\,335       &  5      & 57\,668   & 112\,232\,566   & 1.0$\times10^3$\\\hline
\end{tabular}
}\qquad
\centering
\subfloat[The number of iterations and time (in seconds) it  took to determine the  $\kappa$-carat \diamondcube\ on \textsc{sum}-based \diamondcube{}s.
  The estimate for $\kappa$ is the tight lower bound from
  Proposition~\ref{prop:kappa-sum-lower-bounds}.
]{\label{tb:sum-diamond-specs}%
\begin{tabular}{c|rr|rr|l}\hline
  cube          &\multicolumn{2}{c|}{iterations}    & \multicolumn{2}{c|} {value of $\kappa$}&time \\
                &$\sum n_i$         & actual        & est.          & actual                       & (in seconds)\\\hline
\bCubeFive      & 31\,634           &4              & 729           & 25\,632                       & 5.6$\times 10^1$\\

\cCubeTwo       & 504                & 5             &  1\,853      &  3\,600\,675                 & 6.0$\times10^{-1}$\\
\dblpCubeTwo   & 645\,739           & 7             &  113          & 119                          & 7.5$\times 10^{-1}$\\
\nfCubeTwo      &484\,141           & 40            &  5            &  3\,483                      & 1.6$\times 10^2$\\ 
\ssbCubeOne     & 22\,000           & 8             & 2\,124\,269   &1\,581\,756\,429               & 4.6$\times 10^0$\\
\twCubeThree    & 674                & 3             &  85          &  85                          & 4.3$\times 10^{-2}$\\ 
\wCubeTwo       & 48\,654           &19             & 32            & 20\,103                       & 1.9$\times 10^3$\\ 
 \hline
\end{tabular}
}
\end{table}

Statistics are provided in Table~\ref{tb:count-diamond-specs}. The
estimate of $\kappa$ comes from Proposition~\ref{prop:kappa} and the
number of iterations recorded is the number 
used by Algorithm~\ref{algo:shrinkingInput} to compute
the
$\kappa$-carat \diamondcube{} given $\kappa$. 
The estimates for $\kappa$ vary between 4\%
and 50\%  of the actual value and there is no clear pattern to
indicate why this  might be. Two very different cubes both have
estimates that are  50\% of the actual value: \twCubeOne{}, a  small cube of  less than 2\,000 cells  and low  dimensionality,  and \wCubeFour{}, a large cube of  $1.23 \times 10^8$ 
  cells with moderate dimensionality.  We experimented with sampling to provide an improved estimate  for $\kappa$.  We chose 10~independent samples for each of 1\%, 5\% and 10\% 
using 3 of our largest cubes, which
have different characteristics. We computed $\kappa$ for each of these cubes.    In Table~\ref{tb:sampling-for-kappa} we see that even with just 1\% of the data, the estimate for $\kappa$ is very close to 1\% of the actual value, and, therefore, provides a better estimate than that of the bound (used as an estimate) given by Proposition~\ref{prop:kappa}.  
Since it is an estimate, rather than a bound, we can test whether the diamond is empty
for this value.
Depending on the outcome, our estimate  can then be used as an upper or a lower bound for the binary search.

\begin{table}

\captionsetup[table]{position=top}
\caption{\label{tb:sampling-for-kappa}We can use the value of  $\kappa$ from uniform samples of the data to estimate $\kappa$ for the entire cube.}
\begin{center}
\begin{tabular}{lll|ll|ll|ll}\hline
Cube         & Actual $\kappa$ & Estimate & \multicolumn{6}{c}{Sample Size (10 samples)}\\
             &                 & from Prop~\ref{prop:kappa}& 1\%       & 1\%    & 5\% & 5\%     & 10\%  & 10\%\\
             &                 &                           & min       & max    & min      & max           &min     & max \\\hline
\nfCubeOne   & 1004            & 197                       & 11        & 11            & 51       & 51     &101     &102     \\
\wCubeFour   & 4554            & 2550                      & 44        & 45            & 223      & 224    &450     & 451    \\
\bCubeOne    & 14\,383           & 1\,723                      & 144       & 147           & 717      & 728           & 1\,428       & 1\,440 \\\hline
\end{tabular}
\end{center}
\end{table}

\subsection{Finding  \texorpdfstring{$\kappa$}{kappa(C)} for \textsc{sum}-based \Diamondcube s}
\label{sec:findingkappaforsumbased}

From  Proposition~\ref{prop:kappa-sum-upper-bounds}, we have that $\min_i(\max_j(\sigma(\mathrm{slice}_j(D_i))))$ is 
 an upper bound on $\kappa$ for  any \textsc{sum}-based \diamondcube\  and from
 Proposition~\ref{prop:kappa-sum-lower-bounds} a lower bound  is the
 maximum value stored in any cell.  Indeed, for cube
 \twCubeThree\  the  lower bound  is the $\kappa$ value.  For this reason, the approach to finding $\kappa$ for the \textsc{sum}-based \diamondcube{}s  varies slightly   in that the first guess for $k$ should be the lower bound + 1.  
 If this returns a non-empty \diamondcube, then a binary search over
 the range 
from the lower bound + 1 to the upper bound
is used to find $\kappa$.
 Statistics are given in Table~\ref{tb:sum-diamond-specs}. 

\subsection{Comparison of Algorithm Speeds\label{sec:algo-times}}

\begin{figure}
  \begin{center}
      
      \includegraphics[width =.49\textwidth]{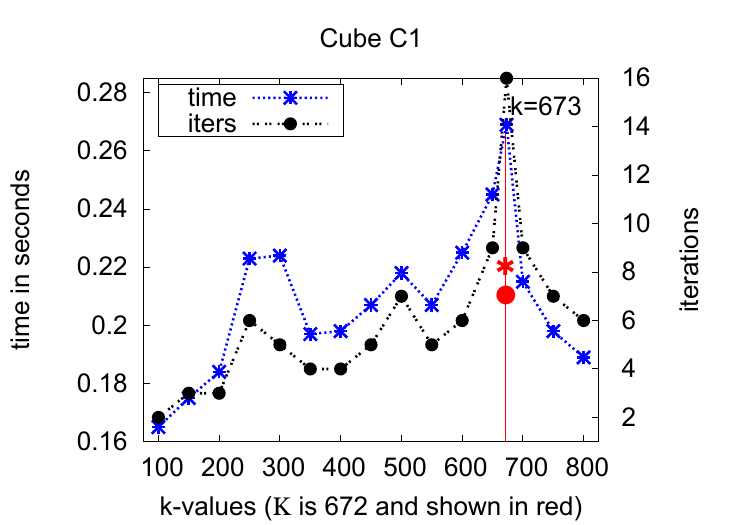}
\includegraphics[width=.49\textwidth]{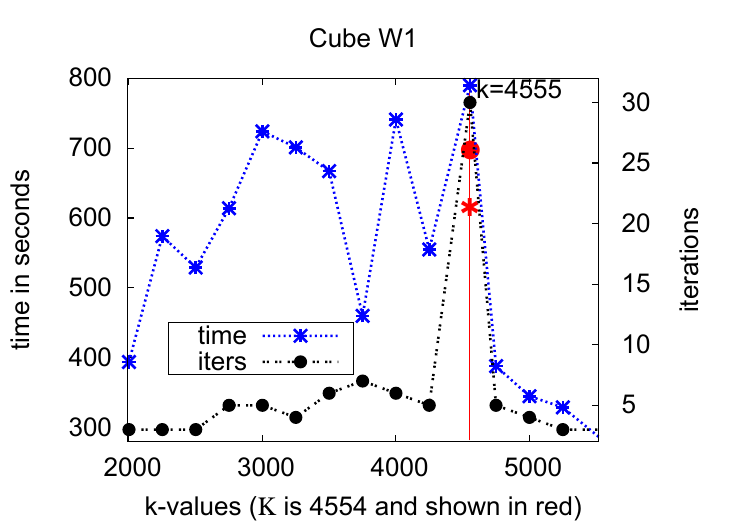}
 
\caption[Times and iterations needed to generate \diamondcube s with
  different $k$-values.]{%
Times and iterations needed to generate
  \diamondcube s with different $k$-values. In each case
  more   iterations are required for $k$-values that slightly
  exceed $\kappa$. 
 The increase from
$k=672$ to $k=673$ is particularly evident.
\label{fig:iters-vs-time}}
  \end{center}
\end{figure}

In Table~\ref{tb:iters} we report times for processing  the  $\kappa$-carat \diamondcube\ for each
of nineteen cubes.  Our implementation processes cubes of 20\,000\,000 -- 
40\,000\,000  records in less than a
minute. 

Table~\ref{tb:sql-slowdown} compares the speeds of
Algorithms~\ref{algo:shrinkingInput}  or 
\imd\  with
Algorithm~\ref{algo:bothSQL-and-loop}.  
  Times were averaged over five runs  and then normalised against \emd\
  or \imd \@. We see that \emd\ and \imd\ 
effect greater speed-up as the cube size increases and the cube density decreases. For
example,  \imd\  is 4 times faster on the small, dense cube,
\twCubeOne{}, and~\ref{algo:shrinkingInput} is  500~times faster on the more
sparse cube, \nfCubeThree. 

 Although the \diamondcube\ dice operation is inherently a row-wise computation, we find MonetDB can be much faster than MySQL (up to 23~times faster).  MonetDB was able to complete even the most difficult computation in no more than 3.6~hours whereas MySQL needed more than 19~hours in this case. However, our Java code did it in 17~minutes. The amount of available memory affects the running times of all our algorithms.   We restricted the amount of  memory to 2\,GiB
 and processed  cube \nfCubeOne :
\begin{itemize}
\item \emd\ was 2.6~times slower (1.5~minutes).
\item  MonetDB was 10~times slower (9~hours).   
\item MySQL was forcibly terminated after 23~hours.
\end{itemize}   

Neither the initial file size nor the number of cells pruned from each
$k$-carat \diamondcube\ alone explains the time necessary to generate each
\diamondcube. In an earlier implementation of the \diamondcube\ dicing algorithm, we had
observed that the time expended was proportional to the number of
cells processed.  This is not as evident in the current implementation, where a new
file is  written  when 50\% of the cells are marked for deletion,
instead of at every iteration.  However,  for all cubes more iterations and time are
required to process $k$-values that only slightly exceed $\kappa$.  
In one instance (see Fig.~\ref{fig:iters-vs-time}), we need more than twice the number of iterations and nearly twice
the time to compute the (empty) $k=673$-carat \diamondcube\  than to compute
the $\kappa=672$-carat \diamondcube.  In both examples presented in Fig.~\ref{fig:iters-vs-time} (cubes \cCubeOne{} 
and \wCubeFour{}), the number of iterations needed to compute the $k$-carat  \diamondcube\ for a value of $k$ either 20\% above or below $\kappa$ is at least half of the  number of iterations observed for $k=\kappa + 1$.  Similarly, 30\% more time is required to
process $\kappa +1$  for cube \wCubeFour.
Intuitively, one should not be
surprised that more iterations, and thus time, 
are required when $k \approx \kappa$: attribute values that are almost in
the \diamondcube\ are especially sensitive to  other attribute values
that are also almost in the \diamondcube.

\begin{table}[htb]
\centering
\caption[
Slowdown of the SQL algorithms]{%
Relative slowdown of the SQL algorithm compared to \emd{} or \imd{}. Times were averaged over ten runs.  
MySQL processing for cube \bCubeEight\ was forcibly terminated after 19 hours ($\bigotimes$).\label{tb:sql-slowdown}}
\begin{tabular}{lllll|ll}\hline
             &                          &           & \multicolumn{2}{c|}{SQL(s)}&     \multicolumn{2}{c}{Ratio}       \\
  Cube       &\imd\  (s)                & \emd\ (s)         & MySQL              & MonetDB          & MySQL              & MonetDB\\\hline
\cCubeOne    & 4.0$\times10^{-1}$        &  ---              & 1.1$\times10^{1}$   & 1.0$\times 10^1$  & 2.7$\times10^1$     & 2.5 $\times 10^1$\\
\dblpCubeOne & 3.0$\times 10^{-1}$       &  ---              & 7.4$\times 10^{1}$  & 7.9$\times 10^1$  & 2.5$\times 10^2$    & 2.6$\times 10^2 $\\
\dblpCubeThree & 8.0$\times 10^{-1}$     &  ---              & 1.2$\times 10^2$    & 9.3$\times 10^1$ & 1.5$\times 10^2$     & 1.2$\times 10^2$ \\
\bCubeTwo    &  ---                     & 8.0$\times10^{0}$  & 1.9$\times10^{3}$   & 8.2$\times 10^1$  & 2.4$\times10^1$     & 1.0 $\times 10^1$\\
\bCubeThree  &  ---                     & 1.2$\times10^{1}$  & 2.7$\times10^{3}$   & 1.1$\times 10^2$  & 2.3$\times10^2$     & 9.0 $\times 10^0$\\
\bCubeFour   &  ---                     & 1.2$\times10^{1}$  & 3.5$\times10^{3}$   & 2.2$\times 10^2$  & 2.9$\times10^2$     & 1.8 $\times 10^1$\\
\bCubeEight  &  ---                     & 1.0$\times 10^{3}$ &  $\bigotimes$      & 1.3$\times 10^{4}$ & ---                & 1.3 $\times 10^1$ \\
\nfCubeThree & 6.0$\times10^{0}$         & ---               & 1.3$\times10^{3}$   & 3.3 $\times 10^2$ & 2.0$\times10^2$     & 5.5 $\times 10^1$\\
\nfCubeOne   &  ---                     & 3.4$\times10^{1}$  & 1.9$\times 10^{4}$  & 5.0$\times 10^3$  & 5.6$\times 10^2$    & 1.5 $\times 10^2$ \\\hline
\end{tabular}

\end{table}

\subsection{\label{subsec:diamond-size}\Diamondcube\ Size and Dimensionality}
The size (in cells)
 of the $\kappa$-carat \diamondcube\ of the high-dimensional cubes is 
large, e.g.\ the $\kappa$-carat \diamondcube\
for \bCubeEight\ captures  30\% of the data. How can we explain this? Is this  property a function of the number of dimensions?   To answer this question 
 the $\kappa$-carat \textsc{count}-based \diamondcube\ was generated for each of the synthetic  cubes (except \ssbCubeOne ). 
  Estimated $\kappa$, its real value  and the size in cells for each cube are given in Table~\ref{tb:synth-kappa}. 
The $\kappa$-carat \diamondcube\ captures 98\% of the data in cubes
\uCubeOne{}, \uCubeTwo\ and \uCubeThree ---dimensionality has no effect
on \diamondcube\ size for these uniformly distributed data sets.
Likewise,  dimensionality did not affect the size of the
$\kappa$-carat \diamondcube\ for the skewed data cubes as it captured
between 23\% and 26\% of the data in cubes \sCubeOne{}, \sCubeTwo\ and
\sCubeThree\ and between 12\% and 17\% in the other cubes.  These
results indicate that the dimensionality of the cube does not affect 
how much of the data is captured by the \diamondcube\ dice.  

\begin{table}[htb]
\centering
\caption{\label{tb:synth-kappa}High dimensionality does not affect \diamondcube\ size.}
\begin{tabular}{crrrrrc}\hline
Cube        & dimensions & iters     & \multicolumn{2}{c}{value of $\kappa$}    & size (cells)  & \% captured \\%
            &            &           & est. & actual   &                &      \\\hline
\uCubeOne   &   3        &  6        & 89        & 236      & 982\,618       &   98 \\
\uCubeTwo   &   4        &  6        & 66        & 234      & 975\,163       &   98 \\
\uCubeThree &  10        &  7        & 25        & 229      & 977\,173       &   98 \\\hdashline[1pt/1pt]
\sCubeOne   &  3         &  9        & 90        & 1141     & 227\,527       &   24 \\
\sCubeTwo   &  4         &  14       & 67        & 803      & 231\,737       &   23 \\
\sCubeThree &  10        &  14       & 25        & 208      & 260\,864       &   26 \\\hdashline[1pt/1pt]
\ssCubeOne  &  3         & 18       & 11        & 319      & 122\,878       &   12 \\
\ssCubeTwo  &  4         & 19       & 7         & 175      & 127\,960       &   13 \\
\ssCubeThree&  10        &  17       & 1         & 28       & 165\,586       &   17\\ \hline
   \end{tabular}
\end{table}

\subsection{Iterations to Convergence}

\begin{figure}
\centering
\includegraphics[width=0.75\textwidth]{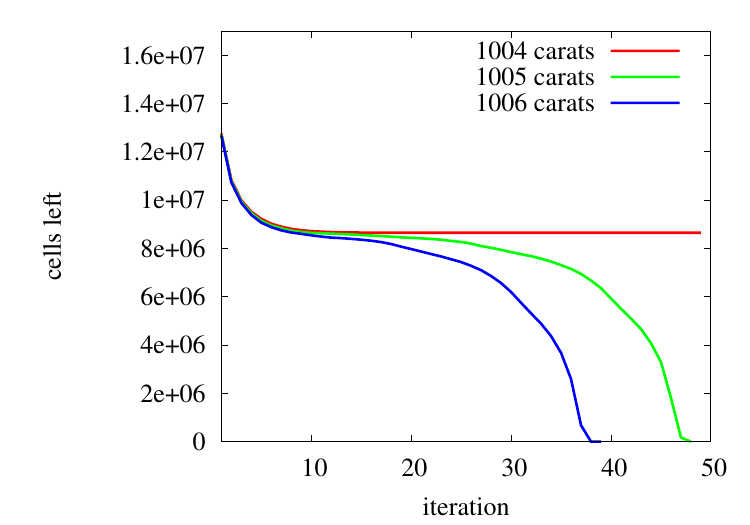}
\caption{\label{fig:cells-left-netflix}
 Cells remaining after each iteration of  Algorithm~\ref{algo:shrinkingInput}  for $k$ = 1004, 1005 and 1006 on
 cube \nfCubeOne{}.}\end{figure}

In Section~\ref{subsec:complexity} we observed that in the worst case it
could take $\Theta(\sum_i n_i)$ iterations before the
\diamondcube\ cube stabilised.  In practise this was not the
case. (See Table~\ref{tb:iters} and
Fig.~\ref{fig:iters-vs-time}).
 All cubes converged to the $\kappa$-carat \diamondcube\ in less than 1\% of
 $\sum_i n_i$,  with the exception of the small cube \twCubeOne{},  which took less than 7\% $\sum_i n_i$.   Algorithm~\ref{algo:shrinkingInput} required  19~iterations and  34~seconds\footnote{Times were averaged over 10~runs.} to compute the 1\,004-carat $\kappa$-carat \diamondcube\ for \nfCubeOne\ 
 and it took 50 iterations and an average of 72~seconds 
 to determine that there 
is  
no 1\,005-carat \diamondcube{}. 

For several values of $k$, we measured the number of cells remaining in cube \nfCubeOne\  after each iteration of Algorithm~\ref{algo:shrinkingInput}, in order to see
how quickly the \diamondcube\ converges to an empty \diamondcube\ when $k$  exceeds $\kappa$. Fig.~\ref{fig:cells-left-netflix} shows the number of cells present in the \diamondcube\ after each iteration for 1\,004--1\,006~carats. 
The curve for 1\,006 reaches zero first, followed by that for 1\,005. Since
$\kappa = 1\,004$, that curve stabilises at a nonzero value. 
It takes longer to reach a critical point when $k$  only slightly
exceeds $\kappa$. 

 The number of iterations
required until convergence for all our  synthetic cubes was also
far smaller than the upper bound,
e.g.\ cube \sCubeThree :
35\,616~(upper bound) and 14~(actual).  We had expected to see the
uniformly distributed data taking longer to converge than the skewed
data. This was not the case: in fact the opposite behaviour was observed.  (See Table~\ref{tb:synth-kappa}.)  For cubes \uCubeOne{}, \uCubeTwo\ and \uCubeThree\ the \diamondcube\ captured 98\% of the cube: less than 23\,000 cells were removed, suggesting that they started with a structure very like a \diamondcube\ but for the  skewed data cubes---\sCubeOne{}, \sCubeTwo{}, \sCubeThree{}, \ssCubeOne{}, \ssCubeTwo\ and \ssCubeThree ---the \diamondcube\ was more ``hidden''. 

\section{Related Work\label{sec:related-work}}

There are other multidimensional operations that can be useful to an analyst, such as Skyline (Section~\ref{sec:skyline}),  Nearest Neighbours and Outliers (Section~\ref{sec:subsampdatabase}). However, they differ from \diamondcube s in several ways. Except for \textsc{iterative pruning}, none
is a form of dicing, that is they do not select interesting attribute values, and some assume that attribute values are ordered or that we have a distance measure between records. In the rest of this section, we review these related queries in more detail.

\subsection{Trawling the Web for Cyber-communities \label{subsec:trawling}}

A specialisation of the \diamondcube\ cube is found in Kumar~et~al.'s work searching for emerging social networks on the Web~\cite{313082}.  Our approach is a generalisation of  their two-dimensional \textsc{iterative pruning} algorithm. \Diamondcube s are inherently multidimensional.
 Kumar et al.~\cite{313082} model the Web as a directed
graph and seek large dense bipartite sub-graphs. 
 A bipartite graph is dense if most of the vertices in the two
 disjoint sets,  $U$ and $V$, are connected. Kumar et al.~hypothesise that the signature of an emerging Web community contains at least one ``core'', 
 which is a \emph{complete} bipartite sub-graph with at least $i$ vertices from $U$ and $j$ vertices from $V$. In their model, the vertices in $U$ and $V$ are Web pages and the edges are links from $U$ to $V$. Seeking an ($i,j$) core is equivalent to seeking a \emph{perfect} two-dimensional  \diamondcube\ cube (all cells are allocated).  
Their \textsc{iterative pruning} algorithm  is a specialisation of the basic algorithm we use to seek  \diamondcube s: it is restricted to two dimensions and is used as a preprocessing step to prune data that cannot be included in the ($i,j$) cores. 
A multidimensional extension of their algorithm proved to consume too much
memory and run too slowly. See~\cite{hazel-thesis}.

\subsection{Skyline Operator}
\label{sec:skyline}
   The Skyline
   operator~\cite{DBLP:conf/icde/BorzsonyiKS01,springerlink:10.1007/978-1-4419-6045-0_16} 
seeks a set of points  where each point is not ``dominated'' by some others:  a point is
 included in the skyline if it is as good or better in all dimensions
 and better in at least one dimension.   
    Attributes, e.g.\ distance
 or cost,  must be ordered.

Skyline queries  have been adapted to the  OLAP context~\cite{5230659} as the Multi-Objective OLAP (MOOLAP) framework. 
Like diamonds, the goal is to allow an analyst to focus on
\emph{interesting} data. For example, the analyst might be interested
in stores that have either high profitability or high volume of sales (ideally both).
Like diamonds, MOOLAP assumes that user-provided aggregators are 
monotone (e.g., like \textsc{sum}). In contrast to diamonds, MOOLAP results do not form a dice.

\subsection{Sub-sampling with Database Queries}
\label{sec:subsampdatabase}
Relational Database Management Systems~(RDBMS) have optimisation routines that are especially tuned to address both basic and more complex SELECT \ldots FROM \ldots WHERE \ldots\ queries.  
However, there are some classes of queries  that are 
 difficult to express in SQL, or that execute slowly,  because suitable algorithms  are not available
 to the underlying query engine.  
 Besides skyline, they 
include 
top-$k$ and nearest-neighbour queries. 

\paragraph{\texorpdfstring{Top-$k$}{Top-\emph{k}}}

Another query, closely related to the skyline query, is that of finding the ``top-$k$''
 data points.  For example, we may seek the ten most popular 
 products sold in a store. While this can help the work of the
 analyst, browsing only the top-$k$ results can also 
 improve performance~\cite{donjerkovic1999pot} by reducing the size of the result set. 

\paragraph{Nearest Neighbours}

One of the most common multidimensional queries is the nearest neighbour query, which
seeks  elements that are ``close'' to a provided target. For example, given a set of users, we might
seek users who  have a profile similar to the current user.
A common query asks to find the $k$~nearest neighbours (kNN),
that is, $k$ neighbours that are as close as possible
to the target. 

Reverse nearest neighbours~\cite{Korn:2000:ISB:342009.335415}
starts with a given element and asks which possible targets
would have this element in the nearest neighbours. For example,
imagine that customers only visit one of the 10~nearest stores.
Given a customer, which store locations would attract him?
Nearest neighbour queries require a specific distance measure.

\paragraph{Outlier Identification}

Another frequent type of query in multidimensional data analysis 
 is outlier identification. For example, we might seek
elements that are far from most other data
points~\cite{Knorr:1998:AMD:645924.671334}.
Sarawagi et al.~\cite{Sarawagi:1998:DEO:645338.650401} define outliers
in the OLAP context as deviations from anticipated values (computed from a model).   Their approach
 requires learning a model from the data so that anticipated values can be computed. It also serves to highlight possibly interesting data in a large data cube. 

\paragraph{Iceberg Queries}

The iceberg query introduced by Fang et al.~\cite{iceberg98}
eliminates aggregate values below some specified threshold.
For example, if we have sales data by month and by store, we might require sales to exceed a threshold: only pairs (month, store) above the threshold are kept. These might be considered  interesting by the analyst. 
In contrast, \diamondcube\ dicing applies several thresholds  simultaneously. In effect, we could consider
\diamondcube\ dicing as the simultaneous application of several interacting 
 iceberg  thresholds. 

\subsection{Formal Concept Analysis\label{subsec:fca}}

In Formal Concept Analysis~\cite{102547,DBLP:journals/ci/GodinMA95} a
Galois (concept) lattice is built from a binary relation.  It is used
in machine learning to identify conceptual structures among data sets. For example, a concept can be formed from a set of documents and the set of search terms those documents match.  We put a value of 1 in a cell if the corresponding document contains the corresponding term, otherwise we leave the cell unallocated. A Galois concept in this case would be a list of documents and a list of terms such that every document contains every term in the list, and every term is contained in every document. Just like a \diamondcube, Galois concepts must be maximal: there cannot be another Galois concept that contains all the documents and terms, and some more.
 Given the data in Fig.~\ref{tb:2d}, the smallest concept including document 1 is the one with documents \{1, 2\} and search terms \{A,B,C,D\}.  Concepts can be partially ordered by inclusion and thus can be represented by a lattice as in Fig.~\ref{fig:galois}

 Galois lattices are  related to diamond cubes: in effect, a Galois concept is  a \emph{perfect} \textsc{count}-based \diamondcube --- one with all cells allocated --- in a two-dimensional setting. 
 Though 
  formal Concept Analysis
 is  typically restricted to two dimensions, 
Cerf et al.~\cite{1497580}
 generalise formal concepts by presenting an algorithm that is applied to more than two dimensions. Their definition of a closed $n-$set---a formal concept in more than two dimensions---states that each element is related to all others in the set and no other element can be added to this set without breaking the first condition.  It is the equivalent of finding a perfect \diamondcube\ in $n$ dimensions. 

In real data sets, we are unlikely to find large perfect \diamondcube{}s though we can find many small ones, especially if there are many dimensions. Galois concepts are brittle: a single omitted cell is sufficient to make a concept disappear. Thus, for an analyst, Galois concepts may be difficult to use. 

\begin{table}[htb]\centering
\caption{A 3-dimensional relation with closed 3-set \{($\alpha,\gamma)(1,2)(A,B)$\}.\label{tb:n-closed-set} } 
\begin{tabular}{|cc|ccc|ccc|ccc|}\hline
                                                                           &  & \multicolumn{3}{c|}{dimension 3} &\multicolumn{3}{c|}{dimension 3}&\multicolumn{3}{c|}{dimension 3}\\
                                                                           &  &  A & B & C & A & B & C & A & B &C\\\hline
\multirow{4}{3mm}{\begin{sideways}\parbox{18mm}{dimension 2}\end{sideways}}&1 &   1 & 1 & 1 & 1 & 1 & 1 & 1 & 1 &   \\
                                                                           &2 &   1 & 1 &   & 1 & 1  &   & 1 & 1 &  \\
                                                                           &3 &     & 1 &   &   &   & 1 & 1 &   & 1 \\
                                                                           &4 &     &   & 1 & 1 &   & 1 & 1 & 1 & 1\\\hline
                                                                           &  &\multicolumn{3}{c|}{$\alpha$}&\multicolumn{3}{c|}{$\beta$} & \multicolumn{3}{c|}{$\gamma$}\\\cline{3-11}
                                                                           &  &\multicolumn{9}{c|}{dimension 1}\\\hline
\end{tabular}
\end{table}  

\begin{figure}
\centering

\subfloat[A $ 3 \times 3$ \diamondcube\ is embedded in this binary relation.
]{\label{tb:2d}
\begin{tabular}[b]{|cc!{\color{black}\vline\hspace{1pt}}ccccc|}\hline
                                                                           &   &\multicolumn{5}{c|}{Search Terms}\\
                                                                           &   & A & B & C & D & E\\\hline\\[-1.25em]
 \multirow{6}{4.75mm}{\begin{sideways}\parbox{15mm}{Documents}\end{sideways}} & 1 &\cellcolor{grayish}{1} &\cellcolor{grayish}{1} &\cellcolor{grayish}{1} & 1 & \\
                                                                           & 2 &\cellcolor{grayish}{1} &\cellcolor{grayish}{1} &\cellcolor{grayish}{1} & 1 & \\
                                                                           & 3 &\cellcolor{grayish}{1} & \cellcolor{grayish}{1} &\cellcolor{grayish}{1} &  & 1\\
                                                                           & 4 & 1 &  & 1 &  & \\
                                                                           & 5 &  & 1 & 1 &  & 1\\\hline
\end{tabular}
}
\qquad
\subfloat[Galois lattice.  Each element (concept) in the lattice is defined by its extent and intent.]{%
\includegraphics[height=.25\textheight]{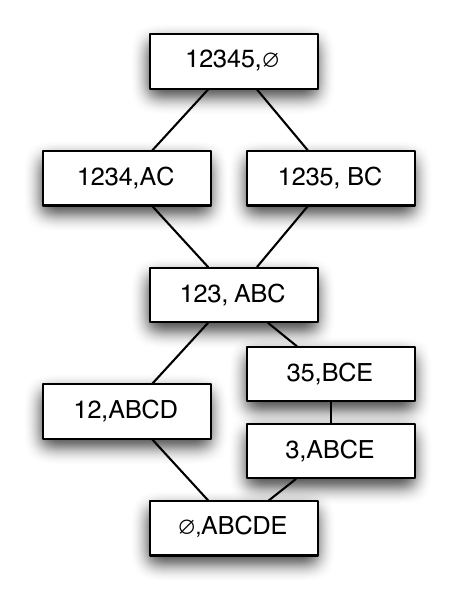}
\label{fig:galois}
}
\caption{Documents and search terms in an information retrieval system and the corresponding Galois lattice. \label{fig:fca}}
\end{figure}

\section{Conclusion\label{sec:conclusion}}
  
We presented a formal analysis  of the \diamondcube\ cube. We have shown that, for the  parameter $k$ associated with each dimension in every data cube, there is only one $k_1, k_2, \ldots k_d$-carat \diamondcube.  By varying the $k_i$'s we get a collection of \diamondcube s for a cube.  We established upper and lower bounds on the parameter $\kappa$ for both  \textsc{count}  and \textsc{sum}-based \diamondcube\ cubes. 

 We have designed, implemented and tested algorithms  to compute \diamondcube{}s
 on real and synthetic data sets.  Experimentally, the algorithms bear
 out our theoretical results. An unexpected experimental result is that the number of iterations
  required to process the  \diamondcube{}s with $k$  slightly greater
  than $\kappa$ is often twice  that required to process the
  $\kappa$-carat \diamondcube{}. 
 This also results in an increase in
  running time.

 We have shown that computing
 \diamondcube s for large data sets is feasible.  
\ref{algo:shrinkingInput} fared better on large, sparse data cubes 
 than other approaches and our results confirm that this algorithm
 is scalable.

\paragraph{Future Research Directions} 

 Although it is faster to  compute a \diamondcube\ cube using our
 implementation than using  the standard relational DBMS  operations,
the speed does not conform to the OLAP goal of
 near constant time query execution.   Different approaches could be
 taken to improve execution speed: compress the data so that more of
 the cube can be retained in memory; use multiple processors in
 parallel; or, if an approximate solution is sufficient,  we might
 process only a sample of the data.  These are some of the ideas to be
 explored in future work.

Data cubes are often organised with hierarchies of relationships within
dimensions.  For example, a \emph{time} dimension may include aggregations
for year, month and day.  Our current  work does not address the issue
of hierarchies and how they might be exploited in the  computation of \diamondcube s.  This is also a potential avenue for future work.

\bibliographystyle{model1-num-names}

\appendix 
 
\renewcommand{\thetheorem}{\Alph{section}.\arabic{theorem}}
\renewcommand{\thecorollary}{\Alph{section}.\arabic{corollary}}
\renewcommand{\theproposition}{\Alph{section}.\arabic{proposition}}
\renewcommand{\thelemma}{\Alph{section}.\arabic{lemma}}
\section{Bounding the carats for \textsc{sum}-based \Diamondcube s \label{sec:sum-diamond-properties}}

For \textsc{sum}-based \diamondcube s, the goal is  to capture a large fraction of
the sum. The statistic, $\kappa$, of a
\textsc{sum}-based \diamondcube\ is the largest sum for which there
exists a non-empty \diamondcube : every slice in every dimension has
sum at least $\kappa$ (see Section~\ref{sec:APrioriBoundsontheCarats}).  Propositions~\ref{prop:kappa-sum-lower-bounds} and~\ref{prop:kappa-sum-upper-bounds}  give tight  lower and upper bounds respectively for $\kappa$.

\begin{proposition}
\label{prop:kappa-sum-lower-bounds}Given a non-empty cube $C$ and the aggregator \textsc{sum},
 a tight lower bound on $\kappa$ is the value of the maximum cell ($m$). 
\begin{proof}  The $\kappa$-carat \diamondcube{}, by definition, is non-empty, so it follows that when the $\kappa$-carat \diamondcube\ comprises a single cell, then $\kappa$ takes the value of the maximum cell in $C$.
When the $\kappa$-carat \diamondcube\ contains more than a single cell, $m$ is still a lower bound: either $\kappa$ is greater than or equal to $m$.\end{proof}
\end{proposition}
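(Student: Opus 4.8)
The plan is to prove the two halves of the claim separately: first that $m$ is a lower bound, i.e.\ $\kappa \geq m$, and then that this bound is attained for some cube, so that it cannot be improved in general.

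For the lower bound I would use the definition of $\kappa$ as the largest \textsc{sum}-threshold admitting a non-empty \diamondcube, which reduces the task to exhibiting a \emph{single} non-empty subcube all of whose slices have sum at least $m$. The natural witness is the subcube consisting of the one cell $c^*$ attaining the maximum value $m$ (which exists since $C$ is non-empty and finite). Dicing each dimension $D_i$ down to the single coordinate of $c^*$ leaves exactly the one allocated cell $c^*$, because it is the only position all of whose coordinates survive. Along every dimension this subcube then has a single slice, namely $\{c^*\}$, whose \textsc{sum} is precisely $m$; hence the subcube has $m$ carats. Being non-empty, it certifies that the $m$-carat \diamondcube\ is non-empty, so $\kappa \geq m$ by definition. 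I would note that this argument invokes neither monotonicity nor uniqueness of the \diamondcube, so it remains valid even when the measures are not sign-restricted.

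For tightness I would exhibit a concrete cube in which $\kappa = m$. The cleanest situation is one whose $\kappa$-\diamondcube\ is a single cell: then that cell's only slice has sum equal to $\kappa$, and combining this with the lower bound already proved forces the cell to be a maximum cell, giving $\kappa = m$. A small $2\times 2$ cube with one dominant, isolated large entry---so that no subcube of more than one cell can keep the minimum slice sum above $m$---makes this explicit; alternatively one may simply cite the experimental cube \twCubeThree, for which the lower bound coincides with $\kappa$.

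The routine direction is the lower bound, where the only care needed is the dicing step---confirming that restricting each dimension to $c^*$'s coordinate genuinely leaves just $c^*$ and introduces no spurious extra cells. The more delicate part is tightness: I must produce an instance in which no larger multi-cell subcube beats the single maximum cell, i.e.\ in which the minimum slice sum over every candidate subcube of size greater than one falls below $m$. Verifying this for a chosen example is the step most likely to require a brief case check, though it is elementary once a suitably ``isolated maximum'' cube is selected.
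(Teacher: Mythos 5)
Your proposal is correct, and it is actually more complete than the paper's own proof, which takes a different decomposition: the paper case-splits on whether the $\kappa$-\diamondcube\ consists of one cell or several, asserting in the first case that $\kappa$ equals the maximum cell value and in the second that ``$m$ is still a lower bound''---but it never exhibits the witness justifying either assertion. Your construction supplies exactly the missing ingredient: dicing every dimension down to the coordinates of a maximum cell $c^*$ yields a $1\times 1\times\cdots\times 1$ subcube whose unique slice along each dimension is $\{c^*\}$ with \textsc{sum} equal to $m$, so a non-empty subcube with $m$ carats exists and $\kappa\geq m$ follows directly from the definition of $\kappa$ as the largest threshold admitting a non-empty \diamondcube. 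You also handle tightness explicitly and correctly: in a cube whose $\kappa$-\diamondcube\ is a single cell of value $v$, one has $v\geq\kappa$ (it lies in the $\kappa$-\diamondcube) and $\kappa\geq v$ (the one-cell witness), hence $\kappa=v\leq m\leq\kappa$ forces $\kappa=m$; an isolated dominant entry, or the paper's empirical cube \twCubeThree, realises this, whereas the paper folds attainment into its single-cell case without verifying that the single cell must be a maximum cell---a fact that, as you note, itself depends on the lower bound. Your observation that the witness argument uses neither monotonicity nor uniqueness is a genuine strengthening, since the paper elsewhere warns that \textsc{sum} over mixed-sign measures can leave the \diamondcube\ ill-defined, yet $\kappa$ defined by existence of a qualifying non-empty subcube still satisfies $\kappa\geq m$ under your argument (the witness has no empty slices, so sign issues never arise). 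In short, both proofs rest on the same underlying idea---the one-cell subcube at the maximum---but your split into a witness-based lower bound plus an explicit attainment instance is rigorous where the paper's case analysis is elliptical.
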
 

Given only the size of a \textsc{sum}-based diamond cube (in cells), there is no upper bound
on its number of carats. However, given its sum, say $S$, then it cannot
have more than $S$~carats.  We can determine a tight upper bound on $\kappa$ as the following proposition shows.

\begin{proposition}
\label{prop:kappa-sum-upper-bounds}
A tight upper bound for $\kappa$ is 
\begin{equation*}
 \min_{i}(\max_j(\textsc{sum}(\mathrm{slice}_j(D_i))))
\mathrm{\ for\ }i \in \{1,2,\ldots,d\} \mathrm{\ and\ } j \in\{1,2,\ldots, n_i\}.
\end{equation*}

\begin{proof}
Let $X$ = $\{ \textrm{slice}_j(D_i)\,|\,
\textsc{sum}(\textrm{slice}_j(D_i)) =
\max_k(\textsc{sum}(\textrm{slice}_k(D_i))\}$  then there is one slice
$x$ whose \textsc{sum}($x$)  is smaller than or equal to all other
slices in $X$.  Suppose $\kappa$ is greater than \textsc{sum}($x$)  then it follows that all slices in this $\kappa$-carat \diamondcube\ must have \textsc{sum} greater than  \textsc{sum}($x$).  However, $x$ is taken from $X$, where each member is the slice for which its \textsc{sum} is maximum in its respective dimension, thereby creating a contradiction.  Such a \diamondcube\ cannot exist.  Therefore, $ \min_{i}(\max_j(\textsc{sum}(\mathrm{slice}_j(D_i))))$ is an upper bound for $\kappa$.  

To show that  $ \min_{i}(\max_j(\textsc{sum}(\mathrm{slice}_j(D_i))))$ is also a tight upper bound we only need to consider a perfect cube where all measures are identical.\end{proof}

\end{proposition}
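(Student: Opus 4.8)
The plan is to prove the two halves of the claim separately: first that $B := \min_i\bigl(\max_j \textsc{sum}(\mathrm{slice}_j(D_i))\bigr)$ is an upper bound for $\kappa$, and then that it is attained, hence tight. For the upper-bound half I would argue by contradiction. Fix the ``bottleneck'' dimension $i^\ast$ on which the outer minimum is attained, and let $x$ be a heaviest slice of that dimension, so that $\textsc{sum}(x)=B$ and every slice of $C$ along $D_{i^\ast}$ has sum at most $B$. Suppose a non-empty $\kappa$-\diamondcube\ existed with $\kappa>B$. Since it is non-empty it retains at least one attribute value in dimension $i^\ast$, and having $\kappa$ carats forces the corresponding slice, taken inside the diamond, to have sum at least $\kappa>B$.

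The step that turns this into a contradiction is the observation that the diamond is a subcube of $C$ and that \textsc{sum} over non-negative measures is monotonically non-decreasing. I would invoke this monotonicity to conclude that any slice of the subcube has sum no larger than the corresponding slice of $C$; combining with the bound $\textsc{sum}(\cdot)\le B$ for every slice of $C$ along $D_{i^\ast}$ shows the retained slice has sum at most $B$, contradicting $\kappa>B$. This forces $\kappa\le B$.

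For tightness I would exhibit a cube meeting the bound. The natural candidate is a \emph{perfect} cube (all cells allocated) in which every measure equals a common value $v$. Then every slice along dimension $i$ has the identical sum $v\prod_{l\ne i}n_l$, so the inner maximum equals that quantity and the outer minimum is attained at the dimension of largest cardinality. Because all slices are equal, the whole cube already satisfies the carat condition and is maximal, so it is its own \diamondcube\ and $\kappa$ equals exactly $B$.

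The main obstacle is the subcube-to-cube comparison in the second paragraph: the whole argument hinges on being able to assert that restricting to a subcube cannot raise a slice's sum. Without the non-negativity (equivalently, monotonically non-decreasing) hypothesis this fails---indeed the earlier mixed-sign example shows the diamond itself need not even be well defined---so I would take care to state precisely where that assumption enters. The remaining bookkeeping (defining the bottleneck dimension cleanly and checking the perfect-cube arithmetic) is routine.
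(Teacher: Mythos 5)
Your proposal is correct and takes essentially the same route as the paper's own proof: an argument by contradiction centred on the maximum-sum slice of the bottleneck dimension (the paper's slice $x$ drawn from its set $X$ of per-dimension maximal slices), followed by the identical tightness witness, a perfect cube with all measures equal. If anything, you are more careful than the paper, which leaves implicit both the monotonicity step you isolate (that restricting to the sub-cube cannot increase a slice's \textsc{sum}, which is exactly where non-negativity of the measures is needed) and the observation that a non-empty \diamondcube\ must retain at least one slice along the bottleneck dimension.
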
 

\end{document}